\newcommand{\R}{\mathbb R}
\def\be#1\ee{\begin{equation}#1\end{equation}}
\newcommand{\fer}[1]{(\ref{#1})}
\newtheorem{proposition}{\bf Proposition}[section]
\newtheorem{remark}{\bf Remark}[section]
\newcommand{\bq}{\begin{equation}}
\newcommand{\eq}{\end{equation}}
\def\bqa{\begin{eqnarray}}
\def\eqa{\end{eqnarray}}
\def\e{\epsilon}
\newcommand{\bd}{\begin{displaymath}}
\newcommand{\ed}{\end{displaymath}}
\newcommand{\ba}{\begin{eqnarray}}
\newcommand{\ea}{\end{eqnarray}}
\def\R{\mathbb{R}}
\newenvironment{equations}{\equation\aligned}{\endaligned\endequation}
\begin{document}

\title{A data-driven kinetic model for opinion dynamics\\ with social network contacts}

\author{G. Albi\thanks{Department of Computer Science, University of Verona, Strada Le Grazie 15, 37134 Verona, Italy. (giacomo.albi@univr.it)} \and E. Calzola \thanks{Department of Computer Science, University of Verona, Strada Le Grazie 15, 37134 Verona, Italy. (elisa.calzola@univr.it)}  \and
	G. Dimarco\thanks{Department of Mathematics and Computer Science \& Center for Modeling, Computing and Statistics (CMCS), University of Ferrara, via Machiavelli 30, 44121 Ferrara, Italy. (giacomo.dimarco@unife.it)}} 

\maketitle

\begin{abstract}
	Opinion dynamics is an important and very active area of research that delves into the complex processes through which individuals form and modify their opinions within a social context. The ability to comprehend and unravel the mechanisms that drive opinion formation is of great significance for predicting a wide range of social phenomena such as political polarization, the diffusion of misinformation, the formation of public consensus, and the emergence of collective behaviors. In this paper, we aim to contribute to that field by introducing a novel mathematical model that specifically accounts for the influence of social media networks on opinion dynamics. With the rise of platforms such as Twitter, Facebook, and Instagram and many others, social networks have become significant arenas where opinions are shared, discussed, and potentially altered. To this aim after an analytical construction of our new model and through incorporation of real-life data from Twitter,  we calibrate the model parameters to accurately reflect the dynamics that unfold in social media, showing in particular the role played by the so-called influencers in driving individual opinions towards predetermined directions. 
\end{abstract}

\noindent
{\bf Keywords}: opinion dynamics, multi-agent systems, data-driven models, kinetic and Boltzmann equations, collective behavior.\\
\textbf{Mathematics Subject Classification}: 35Q91, 91D30, 91B74, 65M75

\tableofcontents

\section{Introduction}
In recent years, kinetic models and specifically Boltzmann equations have emerged as very powerful tools for describing and analyzing the collective behaviors exhibited by systems of interacting agents \cite{PT13,Bellomo}. These models have found applications across a diverse range of fields giving a contribution to the advancement of the knowledge in various disciplines. For example in economics, kinetic models have been recently used to study the dynamics of market prices and trading outcomes (\cite{Cor05,Par14,Cha07,T22,Burger,DPT18}). Using the tools of kinetic theory in financial markets to study the evolution of prices, it is possible to predict the emergence of bubbles and crashes \cite{Li22}. Moreover, Boltzmann type equations permit to characterize the details of the economical interactions and to describe the wealth distribution in a society and the appearance of inequalities \cite{Cor05,Par14}. 
In biology, kinetic models have been extensively used to study the dynamics of populations and the spread of epidemics (see, e.g., \cite{Dim21,Cha04,ZBDDPAFT,BDP21,DTZ221}). The kinetic theory of infectious diseases has been shown to be a powerful mean to describe the spread of a disease in a homogeneous population and also when spatial differences become a key aspect to accurately describe a pandemic \cite{BDP21,BBDP21}. These models can be used to predict the effectiveness of vaccination and quarantine measures in a population \cite{DTZ221} and can be efficiently interfaced with data \cite{ZBDDPAFT}. Boltzmann-type equations have also been used to study the evolution of cooperation and altruism in social systems \cite{albi2017opinion,Burger1} and to analyze the dynamics of genetic mutations (\cite{Tos13}). The application of kinetic models in social sciences has also a long and successful history \cite{Tos14,Gal82,DT19,Dim20}, in this context, we also recall the recent study of information diffusion \cite{fakenews} which can be pursued through these mathematical techniques. In engineering, kinetic models have been used to study the dynamics of traffic flow \cite{DT201,DTZ222} notably the formation and dissipation of traffic jams and the prediction of the effects of traffic control measures (see \cite{Agn15,Gun02,Pup16}). Crowd behaviors \cite{Agn15,ABFH19,BGQR22,Wolfram} and network communication with emphasis on the optimization of communication protocols in wireless networks \cite{Dur09,To06} can be studied as well.

In the above depicted and wide framework of application of kinetic theory to social and biological systems, opinion formation, i.e. the dynamics of how opinions evolve and spread among individuals plays a relevant role due to its importance in the society: political polarization and consensus among others. To shed light on this complex process, the methods of statistical physics have proven to be highly effective and efficient tools for studying and analyzing such phenomena \cite{To06,Dur09,APZ14,APTZ17}. In this regard, one of the key concepts from statistical physics that can be applied to the dynamics of opinion formation is the notion of emergent behavior. Emergence refers to the phenomenon where collective properties and behaviors arise from the interactions and dynamics of individual components. In the context of opinion formation, emergent behavior can manifest as opinion clusters, polarization, consensus, or the formation of influential opinion leaders.

In this paper, we introduce a new kinetic model through the use of probabilistic and statistical tools describing the microscopic dynamics of opinion formation and change. We then upscale our model at the bound of observable quantities and we provide a quantitative framework for analyzing social phenomena and designing interventions that promote constructive dialogue and reduce polarization. One key ingredient of our study consists in the description of the role played by individuals with strong ascendancy on the rest of the population through a detailed analysis of the network to which the population belongs and the high number of connections of such few influential individuals. More in detail, our model describes the evolution of opinions starting from the microscopic bound. Each agent/individual has associated two real values: its number of followers on a given social media platform where he/she is used to interact and its opinion. By assuming that the number of followers is not influenced by their opinion, we first construct an evolutionary model describing the connections among individuals over a fixed social platform. We successively restrict ourselves to a given social network, namely Twitter, and we show how the proposed model is well adapted to describe Twitter networks by matching real data, through a parameter estimation technique, with the equilibrium distribution obtained with this new contact model.

In the second part, we assume that opinions are continuous variables lying on a bounded interval $[-1,1]$ indicating respectively total disagreement and total agreement about a given topic, and that the agents update their opinions after the interaction with others through the social platform. The strength of such interaction is supposed to depend on the number of followers of each agent and on the distance between their opinions. We also suppose that there is a certain amount of randomness in the interaction, modeling external factors which can be hardly controlled such as the possibility to access information and the knowledge of every single individual. Under these hypotheses, we derive the kinetic equation that describes the time evolution of the distribution of opinions in presence of social media contacts in the population, and we finally study its properties using analytical and numerical methods. In the last part, we show that our kinetic model is able to capture important features of opinion dynamics, such as the emergence of consensus and polarization, according to the choice of different interaction kernels between the agents. Furthermore, to comprise data with opinion dynamics, we use a {\em sentiment analysis} (SA) method to assign a score to textual information ({\em the tweets}). We use this method to represent the agents' opinions 
extracting data from Twitter over some specific topic and assigning a score in the interval $[-1,1]$. We mention that SA, also known as {\em Opinion Mining} is a subclass of Natural Language Process (NLP) methods to analyze textual information, in this context we refer to  \cite{zhang2008text,medhat2014sentiment,zhou2013sentiment,Hutto} for further details.
Finally, to fit the actual trend of the opinion distribution of the agents, we calibrate the interaction kernels that rule the evolution of the dynamics, using a parameter estimation approach based on the minimization of a loss function, between data extracted from Twitter, and the result of the simulation. Our results provide insights into the mechanisms that drive opinion formation and changes on a social platform.

The rest of the work is structured as follows. In Section \ref{know5} we describe how to model the formation of a network on social media platforms starting from a microscopic approach and successively upscale the description at a mesoscopic bound by deriving a kinetic equation for the time evolution of the connections. We show that for different choices of a so-called value function appearing in every single interaction among individuals, we are able to recover different stationary distributions for the distribution of contacts, and we explain how to use a given dataset from Twitter in order to select the set of parameters that allows describing the current state of the connections on that specific social network. Section \ref{model} is devoted to the model of opinion formation given the presence of social media contacts. Starting from the binary interaction between two agents, which takes into account the compromise propensity of the agents and a certain amount of randomness in the process, we recover the Boltzmann-like equation that describes the evolution of the density of the joint distribution of opinions and contacts. In Section \ref{nume}, we perform some numerical simulations. In the first part, we analyze the qualitative behavior of our new model and we illustrate its capabilities in describing different artificial situations. In the second part, we focus on real data extracted from Twitter, we perform a sentiment analysis to obtain an opinion distribution and we reconstruct with our new model the interaction kernel that leads to the opinion distribution derived from that sentiment analysis. A last Section \ref{sec:conc} is devoted to drawing some conclusions and individuating some future research axes.

\section{An evolutionary model for contacts }\label{know5}
This section is dedicated to the construction of a mathematical model describing social contacts on the web with an emphasis on social platform networks. 
There exists a vast literature on network modeling, see for instance \cite{Ace11,Das14,Do15,APTZ17} and the reference therein where empirical, Bayesian, and non Bayesian methods and probability approaches based on Poisson distribution are discussed and employed. Here, the path followed is different and it sinks its roots in the interplay between the kinetic theory of gases \cite{Cer,PT13} and the prospect theory of Kahneman and Tversky \cite{KTa} which was first introduced in the context of behavioral economic studies to characterize the science of decision in a population. Our aim is to exploit this theory with the scope of building a model which is able to describe the evolution in time of the number of contacts on a social media platform through the methods and techniques of kinetic theory. Let us observe that a similar study has been performed in \cite{Dim21} for characterizing the contact dynamics related to the spread of an epidemic. However, in the context of the virtual contacts, which are the ones to take place in this study, the results are different as well as the type of equilibrium distributions characterizing the network obtained as shown later. One additional and important point to highlight is that the results achieved in this section permit us to match the real data taken from a given web platform, namely Twitter, with high precision.

In the next section, through modeling choices discussed here for the network formation, we will introduce a detailed description of the microscopic binary interactions taking place among individuals acting on a social network for what concerns the formation of opinions. We will in particular shed light on the role played by individuals with a large number of connections in driving others' opinions. In the sequel, we will often refer to individuals with a number of connections larger than the average to as the influencers. Let us also observe that the choice we will do successively of giving these individuals a larger weight in the opinion balance is consistent with the actual current functioning of most social media platforms: individuals are exposed to content created by popular users more often than the ones posted by their local connections and more likely influenced by the former. For the moment, and for the sake of clarity, we restrict ourselves to the sole case of contact dynamics and we will follow the construction in \cite{Dim20} to derive the master equation of Boltzmann type that describes the evolution of such quantity. 

We consider then a system of agents characterized by the number of their social media followers which, from now on, we will refer to as $c>0$. We assume that our agents are indistinguishable and that, at time $t \geq 0$, they are only characterized by the number of their contacts. The concept of contact or connection here has to be intended as the number of individuals following the contents spread over a given platform by a given second subject. One can then suppose the statistical distribution of contacts/connections of the agents to be fully characterized by the density $h(c,t)$ of contacts, which is such that, given the sub-domain $D\subseteq \mathbb{R}_+$, the integral
\[
\int_D h(c,t)\mathrm{d}c
\]
represents the number of people having $c\in D$ followers at time $t>0$. The density function $h$ is assumed to be normalized to one, so that
\[
\int_{\mathbb{R}_+} h(c,t)\mathrm{d}c = 1.
\]
In analogy with the problem of social climbing presented in \cite{Dim20} where agents were aiming to climb the social ladder to reach high social status, here it is reasonable to assume that the formation of a given network begins due to the will of all the participants to interact and to be heard by others. The consequence is that each individual, we usually refer to him/her to as an agent in the sequel, likely wants to increase its number of contacts when he enters the platform by interacting with other agents. We assume that, for most participants, there exists a given number of social contacts, $\bar c$, such that they considered themselves satisfied when this number is reached. Moreover, the elementary interaction which takes place at the microscopic bound will express the tendency of the agents to reach, at least, the number of followers equals to $\bar c$. 
In order to describe the evolution of social contact dynamics, we will now take inspiration from the prospect theory of Kahneman and Tversky \cite{KTa} and we introduce a value function $\Psi_\delta$ modeling this behavioral theory. We also have
\begin{equation}\label{k1}
	c' = c - \Psi_\delta(c/\bar c) c +\eta c
\end{equation}
where $0 \leq \delta \leq 1$ is a parameter characterizing the intensity of the individual behavior and $\eta$ is a random variable with zero mean and finite standard variation $\nu$. In \eqref{k1}, we are then assuming that each individual tries to increase the number of its followers by getting in touch with friends or by sharing content that may be of interest to others, voicing strong opinions or simply using their social status to capture interest. The result of such dynamics is that the number of followers of each agent can be modified for two reasons, one quantified by a deterministic value function $\Psi_\delta$, which can assume both negative and positive values, i.e. it is possible that with its actions an individual gain or lose connections. The second reason which may lead to a change in the number of followers is due to the intrinsic unpredictability of this complex process which is quantified consequently by a random variable $\eta$ with zero mean. The value function $\Psi_\delta(s)$, $s\geq 0$, encodes the properties of the theory exposed in \cite{KTa}: it is a dimensionless increasing function equal to zero at the point $s=1$ corresponding to the point where most individuals reach satisfaction in terms of their role in the network and it verifies the conditions
\begin{equation}\label{eq:p1}
	-\Psi_\delta(1 - \Delta s) > \Psi_\delta(1 + \Delta s)
\end{equation}
and
\begin{equation}\label{eq:pr2}
	\frac{\mathrm{d}}{\mathrm{d}s}\Psi_\delta(s)\big|_{1 + \Delta s}<\frac{\mathrm{d}}{\mathrm{d}s}\Psi_\delta(s)\big|_{1 - \Delta s}
\end{equation}
for $0 < \Delta s \leq 1$. Request \eqref{eq:pr2} implies that the value function is asymmetric, meaning that it is steeper below the reference point $s=1$ than above it. This models the fact that, if two agents start at the same distance $\Delta s$ from the reference point $s=1$, getting closer to $s=1$ will be easier for the agent starting from below the reference point than for the one starting above. It is quite easy for individuals with a low number of followers to increase the number of their connections by contacting friends and related, while it is more difficult to decrease the number of followers when a certain status, i.e. $c>\bar c$, is reached, and in general, an individual is not interested in decreasing this number. Setting $s=c/\bar c$ we choose the precise form of the value function which reads
\begin{equation}\label{eq:vf}
	\Psi_\delta (s) = -\mu \frac{e^{(s^{-\delta} - 1)/\delta}-1}{(1-\mu)e^{(s^{-\delta} - 1)/\delta} + 1 + \mu}.
\end{equation}
The above equation respects all the properties detailed above for any value of $0< \delta \leq 1$. It has an inflection point $\bar s<1$ and it is convex in $[0,\bar s]$ and concave for $s>\bar s$. This inflection point corresponds to a certain value $\hat c < \bar c$ below which, in principle, the agents do not expect to increase their number of followers, while the satisfactory number of followers $c=\bar c$ corresponds to the reference point $s=1$. 

	Moreover, the value function \eqref{eq:vf}, is bounded by the following relation
	\begin{equation}\label{eq:psibounds}
		-\frac{\mu}{1-\mu}\leq\Psi_\delta\leq - \mu\frac{e^{-1/\delta}-1}{(1-\mu) e^{-1/\delta}+1+\mu}
	\end{equation}
	for  $0\leq\delta\leq1$. Concerning the role played by the parameter $\delta$, one can notice that the smaller is this value, the easiest is the possibility to gain some followers when one is below the value $c=\bar c$.
We report in Figure \ref{fig:ValueFun} different shapes of the value function, for different choices of $\delta$, where the dashed red lines represent the bounds \eqref{eq:psibounds}.

\begin{figure}[h!]
	\centering
	\includegraphics[width=0.75\textwidth]{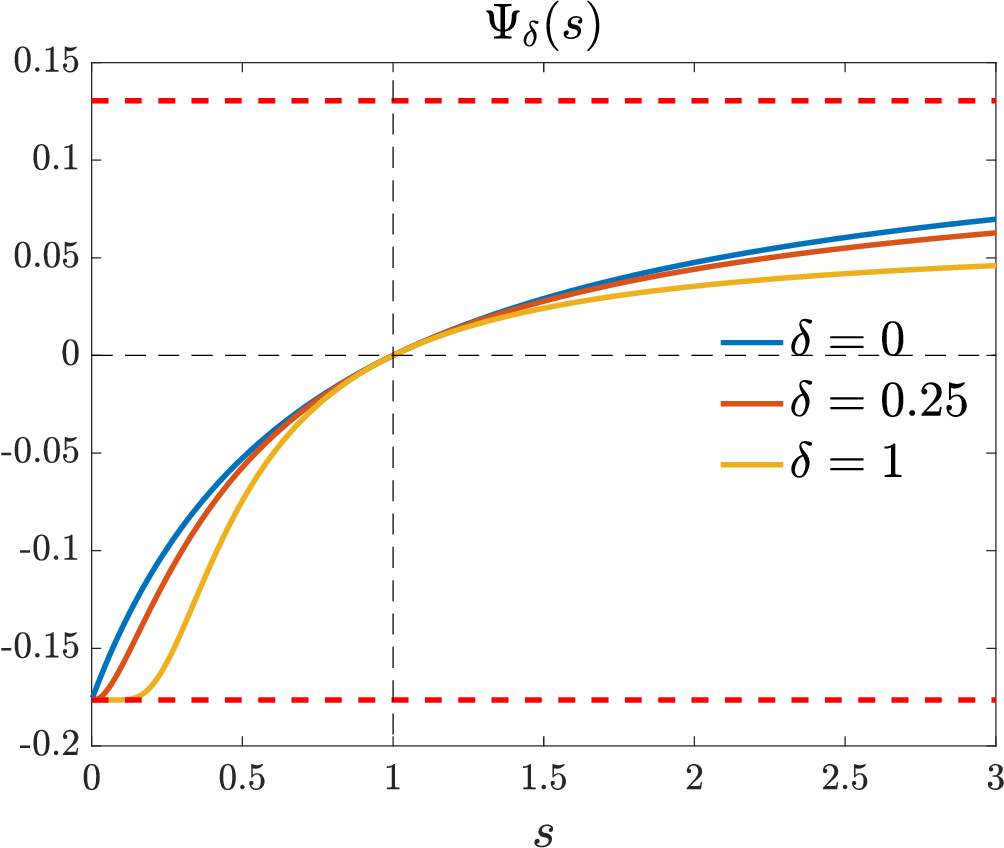} 
	\caption{Profiles of the value function \eqref{eq:vf} for different choices of $\delta$ and $\mu=0.15$. The red dashed lines represent the bounds \eqref{eq:psibounds}}
	\label{fig:ValueFun}
\end{figure}
Before concluding this part, we introduce a rescaling factor in equation \eqref{k1} meaning that we are interested in studying a process in which the formation of this social network is a consequence of small upgrading of the number of connections in time. The rescaled equation reads
\begin{equation}\label{k1eps}
	c' = c - \Psi^\e_\delta(c/\bar c) c +\eta_\e c
\end{equation}
where now
\begin{equation}\label{eq:vfeps}
	\Psi^\e_\delta (s) = -\mu \frac{e^{\e(s^{-\delta} - 1)/\delta}-1}{(1-\mu)e^{\e(s^{-\delta} - 1)/\delta} + 1 + \mu}.
\end{equation}
$\eta_{\e}$ is the same random variable as before but with variance $\e\nu^2$ and $\e$ is a small parameter. The role of $\e$ in the dynamics will be clarified in the next section.

\subsection{The kinetic model for the evolution of social media contacts}\label{kinetic_model}
We aim now in deriving an evolutionary model for the formation of a network on a social media platform resorting to the contact law \eqref{eq:vfeps} derived previously. To that aim, let us observe that the variation of the density $h_\epsilon (c,t)$, also rescaled through the parameter $\e$, obeys a linear Boltzmann-like equation \cite{PT13} whose weak form corresponds to
\begin{equation}\label{eq:ble}
	\frac{\mathrm{d}}{\mathrm{d}t}\int_{\mathbb{R}_+} \varphi (c) h_\epsilon (c,t) \mathrm{d}c = \big\langle \int_{\mathbb{R}_+} \chi (c)\left( \varphi (c_*) - \varphi (c) \right) h_\epsilon (c,t) \mathrm{d}c \big\rangle,
\end{equation}
for all smooth test functions $\varphi(c)$. These functions are the so-called observable quantities of the underlying random process. For example taking $\varphi(c)=1$ leads \eqref{eq:ble} to an equation for the time evolution of the number of individuals in the network which can be easily inferred from \eqref{eq:ble} is constant in time. Instead, the case $\varphi(c)=c$ leads to an evolution equation for the average number of connections in the network which can be inferred, contrary to the previous case, not to be conserved in time. The positive function $\chi (c)$ measures the frequency of the interactions with $c$ followers and the expectation $\langle \cdot \rangle$ takes into account the presence of the random variable $\eta_\epsilon$. More in detail $\langle \cdot \rangle$ gives the expected value with respect to the random space in which $\eta_{\e}$ lives. In order to preserve the positivity of the connections in \eqref{k1} and in the rescaled equation \eqref{k1eps} based on the bounds of the value function \eqref{eq:psibounds}, we require the random variable $\eta_\e$ to be uniformly distributed and to take values in
\begin{equation}\label{eq:boundeta}
		\eta_\e\in\left[-\left|\frac{e^{-\e/\delta}+1}{(1-\mu)e^{-\e/\delta} + 1 + \mu}\right|,\left|\frac{e^{-\e/\delta}+1}{(1-\mu)e^{-\e/\delta} + 1 + \mu}\right|\right].
\end{equation}
In the sequel, we consider collision kernels in the form \cite{Fur}
\begin{equation}\label{eq:ck}
	\chi (c) = c^\beta \alpha,
\end{equation}
for some multiplicative positive constants $\alpha >0$ and for exponents $\beta \geq 0$. In order to establish reasonable values for $\chi(c)$, one can observe that, if $s>0$, the individual rate of growth ${\partial \Psi_\delta^\epsilon (s)}/{\partial s}$ vanishes as $\epsilon \to 0$ (cfr. \cite{Dim20}). So, to maintain a collective growth different than zero for all values of the scaling parameter $\epsilon$, one suitable choice is to take 
\begin{equation} \label{freq}
	\alpha = \frac{1}{\kappa\epsilon},
\end{equation}
corresponding to a frequency of interaction proportional to $1/\epsilon$ with instead $\kappa$ an order 1 constant. Concerning the second parameter $\beta$, we will consider two different situations. The first consists in taking $\beta>0$ which implies that the frequency of interaction becomes greater if the number of connections is higher. This situation is encountered in social platforms where the type of exchanges are typically one to one and consequently one shares contents proportionally to the number of its connections.
The second case we will consider is $\beta=0$ which corresponds to the situation in which interactions are independent of the number of connections meaning that the activity of each individual is independent of the other and the content sharing is independent of the size of the relative network. 
In the case $\beta>0$ a rational choice would consist in setting $\beta=\delta$ when $\delta$ is positive. Choices of $\beta < \delta$ will imply very high variations of the collective growth of the number of social media connections when $c$ is small compared to the case with $c$ large. However, it is not reasonable to expect that individuals with few connections can reach an influential position easily. Conversely, $\beta>\delta$ implies very small variations of collective growth for small values of $c$, excluding consequently the opposite situation, i.e. the possibility to increase the number of followers for individuals having few connections. Hence, choosing $\beta = \delta$ is a good compromise between the two scenarios.  With the above discussed choices then equation \eqref{eq:ble} can be rewritten as
\begin{equation}\label{eq:ble1}
	\frac{\mathrm{d}}{\mathrm{d}t}\int_{\mathbb{R}_+} \varphi (c) h_\epsilon (c,t) \mathrm{d}c = \frac{1}{\epsilon \kappa} \big\langle \int_{\mathbb{R}_+}c^\delta\left( \varphi (c_*) - \varphi (c) \right) h_\epsilon (c,t) \mathrm{d}c \big\rangle.
\end{equation}
To get some insight into the time evolution of the model above we use a standard procedure borrowed from the theory of gases and so we expand in Taylor series $\varphi(c')$ around $\varphi(c)$ supposing $\varphi(c)$ smooth enough. We have that
\[
\langle c' -c\rangle = -\Psi_\delta^\epsilon(c/\bar c) c, \qquad \langle (c' - c)^2 \rangle = \left(\Psi_\delta^\epsilon(c/\bar c)\right)^2 c^2 + \epsilon \nu^2 c^2,
\]
One can also observe that for $0<\delta\leq 1$, it holds that 
\begin{equation}\label{rem} 
	\lim_{\epsilon \to 0} \frac{1}{\epsilon} \Psi_\delta^\epsilon \left( \frac{c}{\bar c} \right) = \frac{\mu}{2\delta} \left(1- \left( \frac{\bar c}{c}\right)^\delta \right).
\end{equation} 
This gives
\[
\langle \varphi(c') - \varphi(c) \rangle = \epsilon\left(-\varphi'(c) \frac{1}{\epsilon}\Psi_\delta^\epsilon(c/\bar c) c + \frac{\nu^2}{2} \varphi''(c) c^2\right) + R_\epsilon(c),
\]
where $ R_\epsilon(c)$ is a remainder of the Taylor expansion such that $R_\epsilon(c) = o(\epsilon)$ thanks to \eqref{rem}. Therefore, using for the interaction frequency \eqref{freq}, we get for the evolution of the observable $\varphi(c)$
\[
\frac{\mathrm{d}}{\mathrm{d}t} \int_{\mathbb{R}_+} \varphi (c) h_\epsilon (c,t) \mathrm{d}c = \int_{\mathbb{R}_+}\frac{c^\delta}{\kappa}\left( -\varphi'(c) \frac{1}{\epsilon}\Psi_\delta^\epsilon(c/\bar c) c + \frac{\nu^2}{2} \varphi''(c) c^2 \right)h_\epsilon (c,t) \mathrm{d}c + \frac{1}{\kappa\epsilon}\mathcal{R}_\epsilon(c,t)
\]
where 
\[
\mathcal{R}_\epsilon (c,t) = \int_{\mathbb{R}_+} R_\epsilon(c)h_\epsilon(c,t) \mathrm{d}c 
\]
and by using \eqref{rem} we get the following approximation
\begin{equation}\label{eq:fp}
	\frac{\mathrm{d}}{\mathrm{d}t} \int_{\mathbb{R}_+} \varphi (c) h (c,t) \mathrm{d}c = \int_{\mathbb{R}_+}\left( -\varphi'(c)\frac{\tilde \mu}{2\delta}\left(1- \left( \frac{\bar c}{c} \right)^\delta \right)c^{1+\delta} + \frac{\tilde\nu^2}{2} \varphi''(c) c^{2+\delta} \right)h(c,t) \mathrm{d}c,
\end{equation}
in which we have set $\tilde \mu = \mu/\kappa$ and $\tilde \nu^2 = \nu^2/\kappa$.
Under the additional hypothesis that the boundary terms produced by the integration by parts vanish, i.e. a zero flux condition, equation \eqref{eq:fp} is the weak form of the following Fokker-Planck equation
\begin{equation}\label{eq:fpstrong}
	\frac{\partial h(c,t)}{\partial t} = \frac{\tilde \mu}{2\delta}
	\frac{\partial}{\partial c} \left( \left(1- \left( \frac{\bar c}{c} \right)^\delta \right)c^{1+\delta}h(c,t)\right) + \frac{\tilde \nu^2}{2} \frac{\partial^2}{\partial c^{2}}\left(c^{2+\delta}h(c,t)\right),
\end{equation}
that describes the evolution of the density of contacts $c\in \mathbb{R}_+$ in the limit of the quasi-invariant variations of followers.

We are now interested in a steady state solution of equation \eqref{eq:fpstrong}. In fact, for the time of dynamics which we aim to study, i.e. the one related to the formation of opinions through social interactions on online platforms, one can reasonably suppose that the connectivity network is stationary being the time at which opinions about a given subject are shaped much faster than the changes in the network. Thus, one can observe that the equilibrium solution of \eqref{eq:fpstrong} is a function solving the first order differential equation
\begin{equation}\label{eq:stat_state} \frac{\tilde \nu^2}{2} \frac{\mathrm{d}}{\mathrm{d}c}\left(c^{2+\delta}h(c)\right) +  \frac{\tilde \mu}{2\delta} \left(1- \left( \frac{\bar c}{c} \right)^\delta \right)c^{1+\delta}h(c)= 0
\end{equation}
To find a solution to \eqref{eq:stat_state}, we perform the change of variable $\rho(c) = c^{2+\delta} h(c)$ and by setting $\gamma = \tilde \mu/\tilde \nu^2 = \mu / \nu^2$, one can easily observe that the function $\rho(c)$ solves the following equation
\begin{equation}\label{eq:ode}
	\frac{\mathrm{d}\rho(c)}{\mathrm{d}c}= -\frac{\gamma}{\delta}\left(\frac{1}{c} - \frac{\bar c^\delta}{c^{1+\delta}}\right) \rho(c).
\end{equation} 
The unique solution of \eqref{eq:ode} is then given by
\begin{equation}\label{eq:equi}
	h_\infty(c) = h_\infty(\bar c)\left(\frac{\bar c}{c}\right)^{2+\delta+\gamma/\delta}\text{exp} \left\{ -\frac{\gamma}{\delta^2}\left(\left(\frac{\bar c}{c} \right)^\delta -1 \right)\right\},
\end{equation} 
which is known as {\em Amoroso distribution}, corresponding to a particular class of the generalized Gamma distribution. 

We focus now on a particular case, i.e. the case in which $\delta \to 0$ and consequently $\beta=0$, i.e. the collision kernel is independent on the number of contacts. In this situation, the value function \eqref{eq:vfeps} degenerates to
\begin{equation}\label{eq:vf0}
	\Psi^\epsilon_0(s) =- \mu \frac{s^{-\epsilon} - 1}{(1-\mu)s^{-\epsilon} + 1 + \mu}=\left(\frac{\mu}{1-\mu}\right) \frac{s^{\epsilon}-1}{\frac{1+\mu}{1-\mu}s^{\epsilon} +1},
\end{equation} 
where now the following limit holds true
$$
\lim_{\epsilon \to 0} \frac{1}{\epsilon}\Psi^\epsilon_0\left( \frac{c}{\bar c}\right) =\frac{\mu/(1-\mu)}{\frac{1+\mu}{1-\mu}+1} \text{ln}\left( \frac{c}{\bar c}\right)=\frac{\mu}{2} \text{ln}\left( \frac{c}{\bar c}\right),
$$
so that the evolution of the observables, as $\epsilon \to 0$ and in a case of a collision kernel which does not depend on the number of connections, is well described by the following Fokker-Planck type equation in weak form
\begin{equation}\label{eq:fpcon0}
	\frac{\mathrm{d}}{\mathrm{d}t} \int_{\mathbb{R}_+} \varphi (c) h (c,t) \mathrm{d}c = \int_{\mathbb{R}_+}\left( -\varphi'(c)\frac{\tilde \mu}{2} \text{ln}\left( \frac{c}{\bar c}\right)c + \frac{\tilde\nu^2}{2} \varphi''(c) c^2 \right)h(c,t) \mathrm{d}c,
\end{equation} 
where $\tilde \mu = \mu/\kappa$ and $\tilde \nu^2 = \nu^2/\kappa$. Under again the zero flux hypothesis at the boundary, one gets a strong form of a Fokker-Planck type of equation 
\begin{equation}\label{eq:fpstrongcon0}
	\frac{\partial h(c,t)}{\partial t} = \frac{\tilde \mu}{2} \frac{\partial}{\partial c} \left( c \text{ln} \left(\frac{c}{\bar c}\right)h(c,t)\right) + \frac{\tilde \nu^2}{2} \frac{\partial^2}{\partial c^2}\left(c^2h(c,t)\right),
\end{equation} 
which equilibrium state now reads
\begin{equation}\label{eq:logn}
	h_\infty(c) = \frac{1}{\sqrt{2 \pi \sigma} c}\text{exp} \left\{ -\frac{(\text{ln} c - \lambda)^2}{2\sigma}\right\},
\end{equation} 
where  $\gamma = \tilde \mu/\tilde \nu^2$ and where we denoted $\sigma = 1/\gamma$ and $\lambda = \text{ln}\bar c - \sigma$. Equation \eqref{eq:logn} is a lognormal probability distribution with mean and variance respectively given by
\[
m(h_\infty) = \bar c e^{-\sigma/2}, \quad \text{Var}(h_\infty) = \bar c^2(1-e^{-\sigma}).
\]

\subsection{Contact distribution on Twitter and fitting}\label{dataextr}
We discuss now the capability of our contact model to describe real networks. To that aim, we first collected data from Twitter in order to reconstruct a typical ensemble of connections. Successively, we estimated the parameters appearing in the general equilibrium state derived previously, \eqref{eq:equi} and \eqref{eq:logn}, in such a way for our model to be as close as possible to real observable networks. The choice of using Twitter among other possible online platforms is motivated by the fact that one key characteristic of Twitter is to be more focused on staying informed and updated with respect, for instance, to Facebook which mainly aims at making friends. Thus, the first seemed more adapted to the study of opinion formation and modification with respect to the latter. 

At the time we started this research, Twitter allowed academics to retrieve information about its users, given the IDs or usernames. Thanks to this possibility, the data have been collected through the user IDs of some of the most followed politicians from around the world, and successively by accessing their network, we retrieved the IDs of a million followers from each of their profiles. We then merged all these data, ignoring possible repetitions, and extracted one million profiles from such a set. 
We then gathered the number of followers of each participant in order to have a statistical representation of the distribution of connections over the platform. 
During this operation, we eliminated profiles with zero followers, assuming them as inactive. In Figure \ref{fig:network} we represent our data set using a sample of $N_{\texttt{s}}=400$ accounts from the $N=10^6$ accounts extracted from Twitter. Sizes of the bubbles are proportional to the logarithm of agents' contacts, where the information of the edges connection is reconstructed based on the statistical distribution of the connections.
\begin{figure}[h!]
	\centering
	\includegraphics[width=1\textwidth]{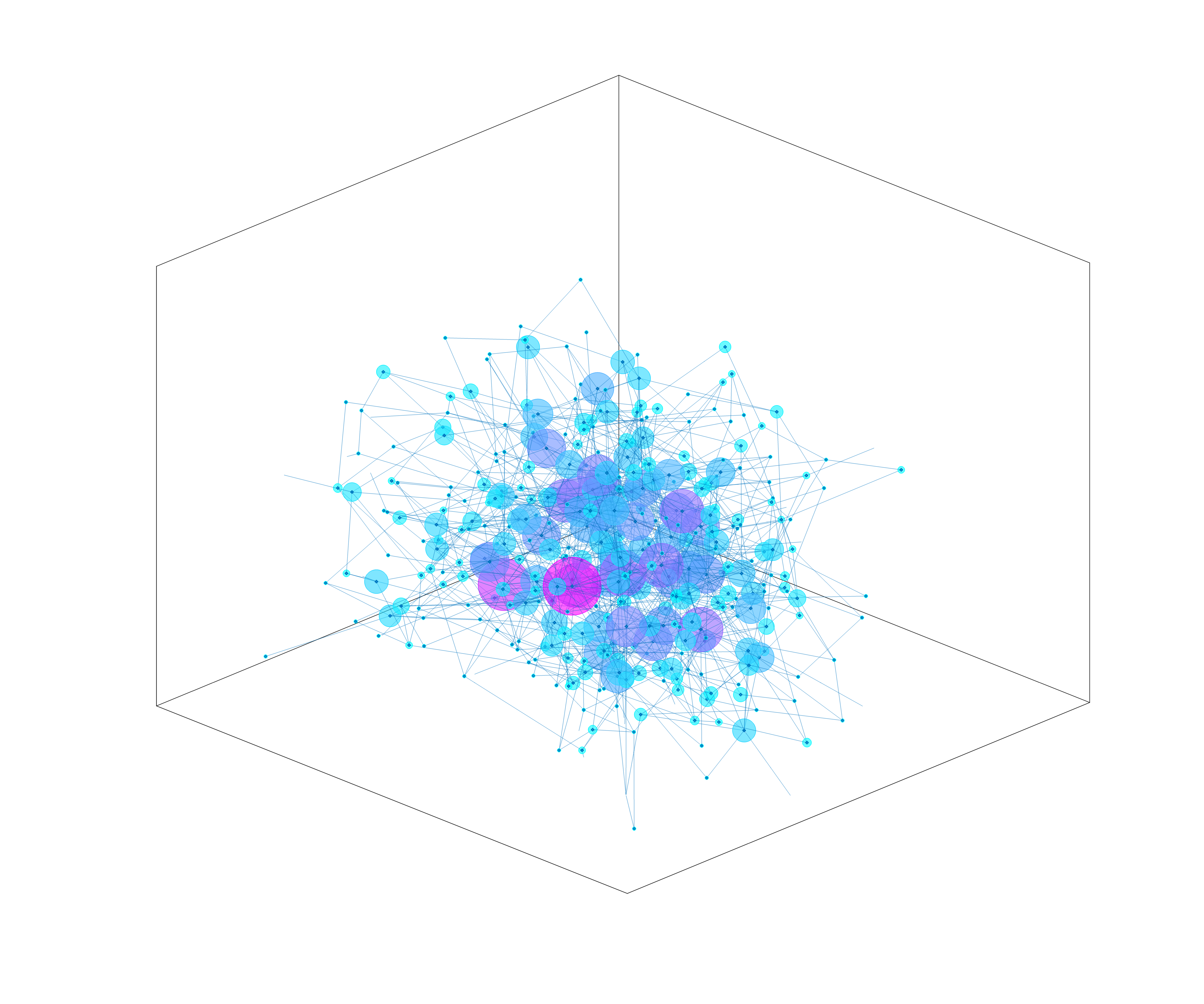} 
	\caption{Representation of the social network using a sample of $N_{\texttt{s}}=400$ accounts from the $N=10^6$ data-set extracted from Twitter. Sizes of the bubbles are proportional to the logarithm of agents' contacts,where edges are reconstructed based on the statistical distribution of the connections}
	\label{fig:network}
\end{figure}
The fitting of the contact distribution arising from the data with the steady state solution \eqref{eq:equi}-\eqref{eq:logn} has been obtained by solving a nonlinear least-squares problem, through the Matlab function \verb|lsqcurvefit|. The analysis of the best fit has been done using different choices for the contact distribution, namely lognormal, Amoroso and Inverse Gamma distributions have been tested (each one corresponding to different values of the parameter $\delta$ appearing in the value function). In Table \ref{tab:t1}, we show the results of such a study obtained with different fitting functions. We can conclude that the best fit is obtained in the case in which the steady-state distribution of the number of social media connections is distributed as a {\em lognormal density}, meaning that the parameter $\delta$ in Section \ref{know5} is equal to $0$.
\begin{figure}[h!]
	\centering
	\includegraphics[width=0.9\textwidth]{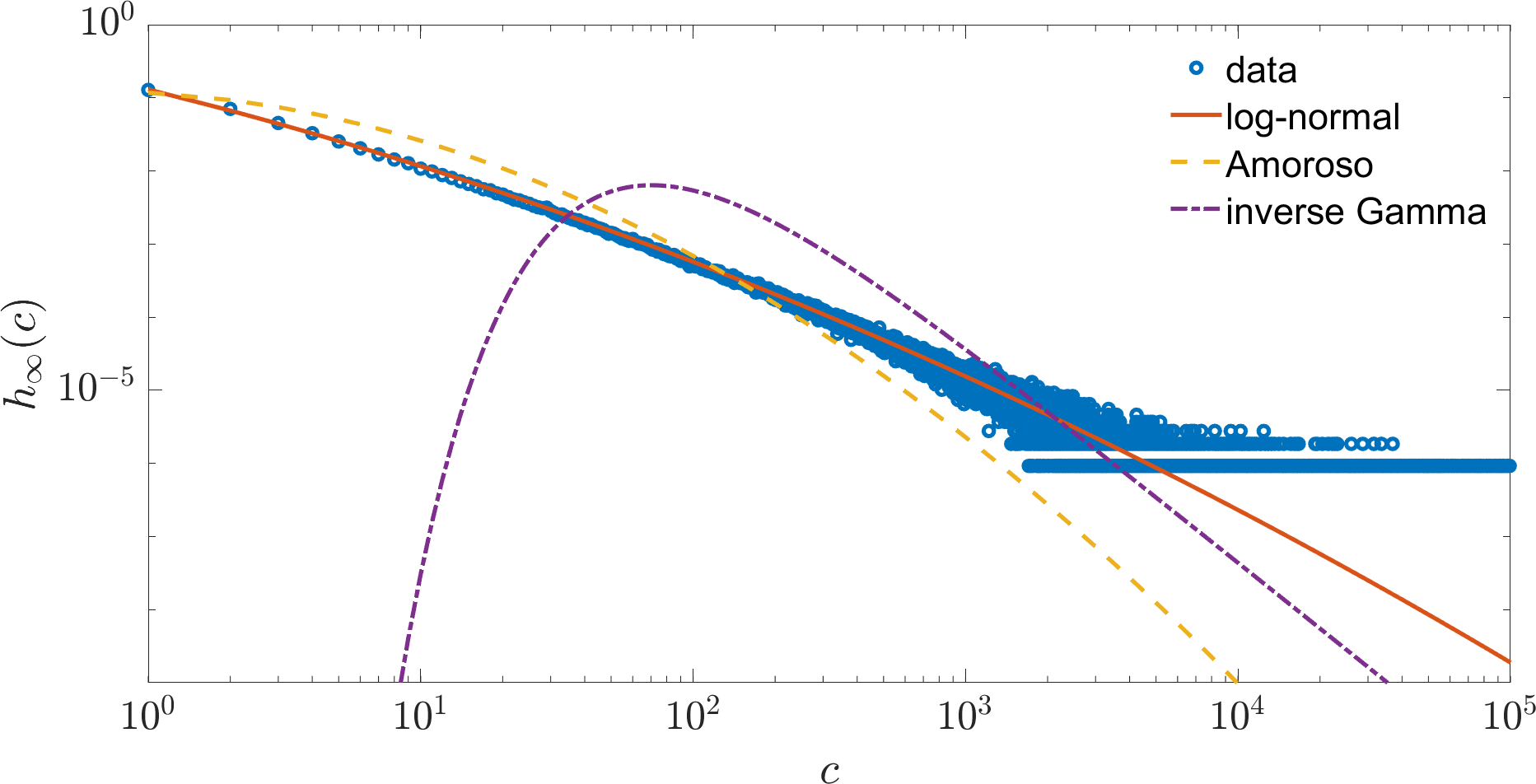} 
	\caption{Comparison between the tails of the data distribution and the different possible equilibrium distributions of the Fokker-Planck models of Section \ref{kinetic_model}}
	\label{fig:code}\end{figure}
\begin{table}[h!]
	\centering
	\begin{tabular}{|c|c|c|c|c|} 
		\hline
		Distribution type & $\delta$ & $\bar c$ & $\gamma$ & Error   \\
		\hline 
		Log-normal distribution & $0$& $40$ & $3.36\cdot 10^{-1}$ & $2.96\cdot 10^{-2}$   \\
		\hline
		Amoroso distribution & $6.56\cdot 10^{-3}$ & $94$ & $3.93\cdot 10^{-1}$ &  $4.68\cdot 10^{-1}$   \\
		\hline 
		Inverse Gamma distribution &  $1$ & $2.11\cdot 10^{6}$ & $1.00\cdot 10^{-4}$ & $1.04$ \\
		\hline 
	\end{tabular}\vspace{0.1cm}
	\caption{\label{tab:t1} Fitting of the contact distribution from Twitter-data}
\end{table}

We should remark that the log-normal distribution only has two parameters, while both the Amoroso and the Inverse Gamma have three parameters, and that the process of fitting is easier when fewer parameters have to be identified. Notice also that both the mean and the variance of the steady state depend on $\nu$, the ratio between the variance of the random percentage of variation of followers, and $\mu$, the maximal percentage allowed of possible variation of followers per interaction. Referring to \eqref{eq:logn}, the values of the parameters resulting from the fitting process are $\lambda = 7.165 \times 10^{-1}$ and $\sigma = 8.882$. In the sequel, we will then use a lognormal distribution to characterize the structure of the network.

\section{Kinetic model of opinions and contacts}\label{model}
As done for the process of evolution of the number of social media contacts, we again start from the microscopic interactions between individuals interacting on a social platform to model the evolution of the distribution of opinions in time, as done for instance in \cite{To06}. However, here we take into account the possibility that opinions of people having a large number of connections have a larger impact on the community and that consequently they more easily modify other opinions.

\subsection{The binary interaction}
We start by associating the opinion of each agent with a variable $v \in I = [-1,1]$. At the microscopic scale, we then suppose that binary interactions between individuals obey the following law
\begin{equation}
	\begin{split}
		\label{eq.trules}
		v' &     =  v + \alpha P(v,v_*,c,c_*)(v_*-v) +\xi D(v,c),\\
		v'_* & =  v_* + \alpha P(v_*,v,c_*,c)(v-v_*)+\xi_* D(v_*,c_*),
	\end{split}
\end{equation}
where $v$ and $v_*$ are the agents' opinions before the interaction, while $v'$ and $v'_*$ are their opinions after interacting. In \eqref{eq.trules}, the function $P$ can be seen as the \emph{compromise propensity} of the agent. In other words, as a consequence of the exchange of relative information, the two interacting agents change their opinions, in a symmetric or more in general non-symmetric way, approaching one the opinion of the other and vice versa. The function $D$ instead is responsible for diffusion effects and it models the unpredictable role played by the environment. It is indeed multiplied by the random variable $\xi$, with $\langle \xi  \rangle =0$, and  $\langle \xi^2 \rangle =\sigma^2$. Let us observe that in the general case depicted in \eqref{eq.trules} the post-interaction opinions depend upon the number of connections of both participants to the interaction. We will detail this dependence later. We remark for the moment that the restriction of the binary interaction to the case in which the values assumed by the couple $(v',v'_*)$ are independent of the number of contacts $(c,c_*)$ can be considered quite classical and it is discussed for instance in \cite{APTZ17}. 

Let now $f(v,c,t)$ be the density of agents which at time $t >0$ are represented by their opinion $v$ and have connection $c$. The time evolution of the distribution of opinions/connections $f(v,c,t)$, consequence of the binary interactions of type \fer{eq.trules} among individuals acting on a social platform, is obtained by resorting to kinetic collision-like models \cite{To06,PT13}. This reads in weak form as
\begin{equation} \begin{aligned}
		\label{kine-ww}
		& \displaystyle\frac{d}{dt}\int_{I \times\R_+}f(v,c,t)\varphi(v,c)\,dv\,dc  =\displaystyle \frac 12
		\Big \langle \int_{I^2\times\R_+^2} \bigl(\varphi(v',c')+ \varphi(v'_*,c'_*)\cr
		&\qquad\qquad\qquad\qquad \displaystyle-\varphi(v,c)-\varphi(v_*,c_*) \bigr) f(v_*,c_*,t)f(v,c,t)
		\,dv\,dv_*\,dc\,dc_* \Big \rangle.
	\end{aligned}
\end{equation}
In \fer{kine-ww}, the post-interaction opinions $v'$ and $v'_*$  are given by  \fer{eq.trules} while the post-interaction connections are given by \eqref{k1}. The operator $\langle \cdot \rangle$ represents the mathematical expectation with respect to the random variables $\xi$ and $\eta$. Let's observe that here we do not consider an interaction kernel depending on the number of contacts as done for instance in \eqref{eq:ble}. This choice is driven by the fact that we aim to represent a specific situation when comparing the model to the experiments, namely the case in which the network is well described by a lognormal distribution as shown in Section \ref{dataextr}. However, we stress that the extension to the case of kernels depending on $c$ is possible even if not discussed in the present work.

The opinion variable $v$ belongs to the bounded domain $[-1,1]$, so it is important to only consider interactions that do not produce values outside of such domains. A sufficient condition to preserve the bounds is given by the following proposition.
\begin{proposition}
	The binary interaction \eqref{eq.trules} preserves the bounds, i.e. $v',v'_*\in[-1,1]$ if $v,v_*\in[-1,1]$ and if 
	\begin{equation}\label{bounds}
		0 < P(v,v_*,c,c_*) \leq 1, \quad 0<\alpha \leq 1/2, \quad |\xi| \leq(1-\gamma^*)d 
	\end{equation} 
	where
	\begin{equation}\label{bounds2}
		\gamma^* = \alpha \min_{\substack{v,v_* \in [-1,1], \\c,c_*>0}} P(v,v_*,c,c_*), \quad d = \min_{\substack{v\in[-1,1], \\ c>0}} \left\{ \frac{1-|v|}{D(v,c)}, D(v,c) \neq 0\right\}.
	\end{equation} 
\end{proposition}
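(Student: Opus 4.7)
The plan is to show $v'\in[-1,1]$; the bound for $v'_*$ then follows by the symmetry (swap $v\leftrightarrow v_*$ and $c\leftrightarrow c_*$) built into the rules \eqref{eq.trules}. The main idea is to decompose the update $v'$ into a deterministic piece that is a \emph{convex combination} of $v$ and $v_*$, and a stochastic piece that is controlled by the structural bound on $d$ in \eqref{bounds2}. Once both contributions are separately estimated in terms of $1-|v|$, preservation of bounds reduces to an elementary algebraic inequality involving $\alpha$, $P$ and $\gamma^*$.

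For the deterministic part, I would rewrite
\[
v+\alpha P(v_*-v) = (1-\alpha P)\,v + \alpha P\,v_*,
\]
and observe that $0<\alpha\leq 1/2$ and $0<P\leq 1$ give $\alpha P\in(0,1/2]$ and $1-\alpha P\in[1/2,1)$. Since the two coefficients are nonnegative and sum to $1$, this is a convex combination of $v,v_*\in[-1,1]$, so the triangle inequality yields
\[
|(1-\alpha P)v+\alpha P v_*|\leq (1-\alpha P)|v|+\alpha P.
\]
For the stochastic part, the very definition of $d$ in \eqref{bounds2} gives $d\,D(v,c)\leq 1-|v|$ whenever $D(v,c)\neq 0$ (the estimate is trivial when $D(v,c)=0$); combined with $|\xi|\leq (1-\gamma^*)d$ this produces $|\xi D(v,c)|\leq (1-\gamma^*)(1-|v|)$. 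Adding the two estimates and simplifying,
\[
|v'|\leq (1-\alpha P)|v|+\alpha P+(1-\gamma^*)(1-|v|) = 1+(\alpha P-\gamma^*)(1-|v|),
\]
so $|v'|\leq 1$ will follow as soon as $\alpha P\leq\gamma^*$ at every admissible $(v,v_*,c,c_*)$, which is exactly the information encoded in the extremal definition of $\gamma^*$ in \eqref{bounds2}.

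The main subtlety I expect is the boundary behaviour at $|v|=1$: there the convex-combination piece can already attain the extreme value $\pm 1$ (for instance when $v=v_*=\pm1$), leaving no room for any noise. This is exactly why the definition of $d$ forces $D(v,c)$ to vanish at least as fast as $1-|v|$ as $|v|\to 1$; otherwise $d$ would collapse to zero and the hypothesis $|\xi|\leq (1-\gamma^*)d$ would trivially kill the diffusion. Making precise that the infimum in \eqref{bounds2} is taken only on the set $\{D\neq 0\}$, and separately handling the degenerate case $D(v,c)=0$, is therefore the only genuinely non-routine point in what is otherwise a direct triangle-inequality argument.
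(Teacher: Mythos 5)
Your route is in substance the paper's own: the paper also sets $\gamma=\alpha P$, writes the deterministic part as the convex combination $(1-\gamma)v+\gamma v_*$, applies the triangle inequality to get $|v'|\le(1-\gamma)|v|+\gamma+|\xi|D(v,c)$, and then asks the noise term to be absorbed by the remaining room, i.e. $|\xi|\le(1-\gamma)(1-|v|)/D(v,c)$ pointwise. Up to that point your estimates coincide with the paper's, and your handling of the degenerate set $D(v,c)=0$ is a reasonable clarification.

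The gap is in your closing step. From $|v'|\le 1+(\alpha P-\gamma^*)(1-|v|)$ you need $\alpha P\le\gamma^*$ for \emph{every} admissible $(v,v_*,c,c_*)$, and you claim this is ``exactly the information encoded in the extremal definition of $\gamma^*$''. It is the opposite: \eqref{bounds2} defines $\gamma^*$ as $\alpha$ times the \emph{minimum} of $P$, which yields $\gamma^*\le\alpha P$, so your bracket $(\alpha P-\gamma^*)(1-|v|)$ is nonnegative and the estimate does not close unless $P$ sits at its minimum (e.g.\ $P$ constant). With a nonconstant $P$ the stated uniform bound can genuinely fail: take $D(v,c)=1-|v|$ (so $d=1$), $\alpha=1/2$, $v=0$, $v_*=1$ with $P=1$ at this pair but $\min P$ close to $0$; then $|\xi|\le(1-\gamma^*)d\approx 1$ is allowed and $v'=\tfrac12+\xi$ can exceed $1$. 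What your computation actually proves is that the extremum in $\gamma^*$ must be a maximum (or one may simply use $1-\gamma\ge 1-\alpha\ge\tfrac12$, since $P\le 1$, and require $|\xi|\le(1-\alpha)d$). To be fair, the paper's proof passes over the same point silently: it derives the pointwise condition $|\xi|\le(1-\gamma)(1-|v|)/D(v,c)$ and then invokes $(1-\gamma^*)d$ without noting that this substitution needs $\gamma^*\ge\gamma$. So your argument is faithful to the paper's strategy, but as written the last inequality is justified in the wrong direction; either replace the minimum by a maximum (or by $\alpha$ itself) in your argument, or state explicitly the extra hypothesis under which the step is valid.
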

\begin{proof}
	Let us define $\gamma = \alpha P(v,v_*,c,c_*)$. We first consider the case in which there is no diffusion, i.e. $\xi = 0$: we have that
	$$
	|v'| = | v + \gamma(v_* - v)| \leq (1-\gamma)|v| + \gamma |v_*| \leq 1,   
	$$
	since $|v|, |v_*| \leq 1$ and, under the hypothesis \eqref{bounds}, we have that $0<\gamma \leq 1$.
	
	Let us now assume that $\xi \neq 0$: we can write, using that $|v_*|\leq 1$,
	\[
	|v'| = |v + \gamma(v_* - v) + \xi D(v,c)| \leq (1-\gamma)|v| + \gamma |v_*| + |\xi| D(v,c) \leq  (1-\gamma)|v| + \gamma + |\xi| D(v,c).
	\]
	So, in order to have $|v'|\leq 1$ it is sufficient to require
	\[
	|\xi| \leq \frac{(1-\gamma)(1-|v|)}{D(v,c)},
	\]
	with $D(v,c)\neq 0$, for all the possible values of $v$ and $c$. Thus, defining $\gamma_*$ and $d$ as in \eqref{bounds}, we get the result.
\end{proof}

\subsection{Fokker-Planck asymptotics}\label{FPdes}
In order to model the fact that the formation of the opinions are due to a large number of interactions, each one producing a small change in the point of view of individuals up to the moment in which the final opinion is formed, we regularize the dynamics of the Boltzamnn-like equation \eqref{kine-ww} by relying on a quasi-invariant scaling. This computation permits us to get some insights on the behavior of the model \eqref{kine-ww} by retrieving a Fokker-Planck equation for the asymptotic combined evolution of contacts and opinions. The quasi-invariant scaling is as follows
\begin{equation}\label{eq:scaling}
	\alpha\to\e\alpha,\qquad \sigma^2  \to \e \sigma^2,
\end{equation} 
for $\e\ll1$, similarly to the relation \eqref{k1eps} for the sole contacts dynamics.

We assume that the scaled random variables $\eta_\e$, $\xi_{\e}$, and $\xi_{\e*}$ are independent with zero mean and bounded moments at least of order $n=3$. We also assume that $\xi_{\e},\xi_{\e*}$ are identically distributed, and that the following  holds
\begin{equation}
	\langle\xi_{\e}\rangle =\langle\xi_{\e*}\rangle= 0,\quad \langle\xi_{\e}^2\rangle =\langle\xi_{\e*}^2\rangle = \e\sigma^2,\quad  \langle\xi_{\e}^3\rangle=\langle\xi_{\e*}^3\rangle = \e^{3/2}\varrho,
\end{equation} 
and  for $\eta_\e$ we have recalled the following
\begin{equation}
	\langle\eta_{\e}\rangle= 0,\quad \langle\eta_{\e}^2\rangle = \e\nu^2,\quad  \langle\eta_{\e}^3\rangle= \e^{3/2}\varkappa.
\end{equation} 
with $\varrho$ and $\varkappa$ two assigned constants. To ease the notation we now rewrite the value function \eqref{eq:vf} multiplied by the number of contacts as follows
\begin{equation}\label{eq:valuenew}
	c\Psi_\delta^\epsilon(c/\bar c) = \mu {L}_\epsilon(c),  
\end{equation} 
and the interaction function in \eqref{eq.trules} as 
\[
E(v,v_*,c,c_*) = P(v,v_*,c,c_*)(v_*-v).
\]
Using the previous properties of the random quantities $\xi, \xi_*, \eta$, the equations \eqref{k1} for the contacts and  \eqref{eq.trules} for the opinions we have the following 
\begin{equations}
	\label{var1}
	& \langle c' -c\rangle = \langle-\mu {L}_\epsilon(c) + \eta_\epsilon c \rangle= -\mu {L}_\epsilon(c),
	\\
	&\langle v' -v \rangle = \langle \alpha P(v,v_*,c,c_*)(v_*-v) +\xi D(v,c) \rangle = \e\alpha E(v,v_*,c,c_*),
\end{equations}
and

\begin{equations}
	\label{var2}
	&  \langle (c' - c)^2 \rangle =\mu^2 {L}_\epsilon(c)^2 + \epsilon \nu^2 c^2,
	\\[6pt]
	& \langle (v' -v)^2 \rangle =  \e^2\alpha^2 E(v,v_*,c,c_*)^2 + \e\sigma^2 D^2(v,c),
	\\[6pt]
	& \langle (c' -c)(v'-v) \rangle = -\e\alpha\mu {L}_\epsilon(c) E(v,v_*,c,c_*)
\end{equations}
while the third order terms are
\begin{equations}
	\label{var3}
	&  \langle (c' - c)^3 \rangle = -\mu^3 {L}_\epsilon(c)^3 + \e^{3/2}\varkappa c^3 - 3\epsilon \mu \nu^2  L_\epsilon(c) c^2 ,
	\\[6pt]
	&  \langle (v' -v)^3 \rangle =  \e^3\alpha^3E(v,v_*,c,c_*)^3 +  \e^{3/2}\varrho D^3(v,c) +3\e^2 \alpha\sigma^2 E(v,v_*,c,c_*) D(v,c)^2,
	\\[6pt]
	& \langle (c' -c)^2(v'-v) \rangle  =  \e\alpha E(v,v_*,c,c_*)(\mu^2 L_\epsilon(c)^2 + \epsilon \nu^2 c^2),
	\\[6pt]
	&  \langle (c' -c)(v'-v)^2 \rangle =-\mu L_\epsilon(c)( \e^2\alpha^2E(v,v_*,c,c_*)^2 +\e \sigma^2 D^2(v,c) ).
\end{equations}
By expanding the smooth function $\varphi(x^*,v^*)$ in Taylor series up to
order two we have
\begin{equation}\label{eq:Tayex}
	\begin{aligned}
		&\langle \varphi(v',c')-\varphi(v,c) \rangle =
		\cr
		&\quad\displaystyle \e\left( \alpha E(v,v_*,c,c_*)\frac{\partial \varphi}{\partial v} - \mu \frac{L_\epsilon(c)}{\e}\frac{\partial \varphi}{\partial c} + \frac 12 \sigma^2 D(v,c)^2 \frac{\partial^2 \varphi}{\partial v^2} + \frac 12 \nu^2 c^2 \frac{\partial^2 \varphi}{\partial c^2} \right)
		\cr
		&+ \frac {\e^2}2 \left( \alpha^2E(v,v_*,c,c_*)^2 \frac{\partial^2 \varphi}{\partial v^2} +  \mu^2  \frac {L_\epsilon(c)^2}{\e^2} \frac{\partial^2 \varphi}{\partial c^2} - \mu \alpha \frac{L_\epsilon(c)}{\e} E(v,v_*,c,c_*)\frac{\partial^2 \varphi}{\partial v \partial c} \right)\cr
		&\qquad\qquad +R_\e(v,v_*,c,c_*),
	\end{aligned}
\end{equation} 
where the remainder of the Taylor expansion $R_\e(v,v_*,c,c_*)$ is expressed as follows
\begin{equation}\label{eq:Tayrem}
	\begin{aligned}
		&R_\e(v,v_*,c,c_*) = \displaystyle  \frac{\e^2}{6} \frac{\partial^3 \varphi}{\partial c^3}(\hat v, \hat c) \left( -\mu^3 \frac{{L}_\epsilon(c)^3}{\e^2} + \e^{-1/2}\varkappa c^3 - 3 \mu \nu^2 \frac{ L_\epsilon(c)}{\e} c^2 \right) \\
		& \, +   \frac{\e^2}{6} \frac{\partial^3 \varphi}{\partial v^3}  (\hat v, \hat c) \left( \e\alpha^3E(v,v_*,c,c_*)^3 +  \e^{-1/2}\varrho D^3(v,c) +3 \alpha\sigma^2 E(v,v_*,c,c_*) D(v,c)^2\right)\\
		& \quad + \frac {\e^2}{2}\frac{\partial^3 \varphi}{\partial v\partial c^2}(\hat v, \hat c)\left( \alpha E(v,v_*,c,c_*)\left(\mu^2 \frac{L_\epsilon(c)^2}{\e} + \nu^2 c^2\right)\right) \\[6pt]
		& \quad\quad  - \frac {\e^2}2 \frac{\partial^3 \varphi}{\partial v^2\partial c}(\hat v, \hat c) \left(\mu \frac{L_\epsilon(c)}{\e}\left( \e\alpha^2E(v,v_*,c,c_*)^2 +\sigma^2 D^2(v,c) \right)\right),
	\end{aligned}
\end{equation}
for $\hat v = \theta_v v'+(1-\theta_v)v$ with $\theta_v \in[0,1]$ and $\hat c=\theta_c c'+(1-\theta_c)c$ with $\theta_c \in[0,1]$.

%

Hence,  scaling the time variable $\tau = \epsilon t$ and using the expansion \eqref{eq:Tayex} in \eqref{kine-ww},  the solution $f_\epsilon$ satisfies the weak relation
\begin{equation}\label{eq:weakfp}
	\begin{aligned}
		&\displaystyle\frac{d}{d\tau}\int_{I \times\R_+}f_\epsilon(v,c,\tau)\varphi(v,c)\,dv\,dc  =\displaystyle
		\int_{I\times\R_+} \left( \mathcal{E}[f_\epsilon](v,c,\tau)\frac{\partial \varphi}{\partial v} - \frac{\mu}{\epsilon} \Psi_\delta^\epsilon(c/\bar c) c \frac{\partial \varphi}{\partial c} \right. \\
		&\qquad\quad\displaystyle\left.+ \frac 12 \sigma^2 D^2(v,c) \frac{\partial^2 \varphi}{\partial v^2}  + \frac 12 \nu^2 c^2 \frac{\partial^2 \varphi}{\partial c^2}  \right)f_\epsilon(v,c,\tau)\,dv\,dc+ \mathcal {R}_\e(\varphi),
	\end{aligned}
\end{equation} 
where we introduced the following notation for the non-local operator
$$
\mathcal{E}[f_\epsilon](v,c,\tau) =\alpha \int_{I\times\R_+} E(v,v_*,c,c_*) f_\epsilon(v_*,c_*,\tau)\,dv_*\,dc_*,
$$ and where the scaled reminder is 
\begin{equation}\label{eq:remind}
	\begin{aligned}
		&\mathcal{R}_\e(\varphi) =  \frac {\e}2\int_{I^2\times\R_+^2} \left( \alpha^2E(v,v_*,c,c_*)^2 \frac{\partial^2 \varphi}{\partial v^2} +  \mu^2  \frac {L_\epsilon(c)^2}{\e^2} \frac{\partial^2 \varphi}{\partial c^2}\right.\cr
		&\left.\quad\quad - \mu \alpha \frac{L_\epsilon(c)}{\e} E(v,v_*,c,c_*)\frac{\partial^2 \varphi}{\partial v \partial c} 
		\right)f_\epsilon(v,c,\tau)f_\epsilon(v_*,c_*,\tau) \,dv\,dv_*\,dc\,dc_*\cr
		&\qquad\qquad + \frac {1}{\e}\int_{I^2\times\R_+^2} 
		R_\e(v,v_*,c,c_*)f_\epsilon(v,c,\tau)f_\epsilon(v_*,c_*,\tau) \,dv\,dv_*\,dc\,dc_*.
		%
	\end{aligned}
\end{equation} 
For $\e\to 0$ we recall that from \eqref{eq:valuenew} and \eqref{eq:Tayrem} we have the following
\[
\begin{aligned}
	&L_\e(c)\to 0,\qquad {L_\e(c)}/{\e} \to\Phi_\delta(c),\qquad
	{R_\e(v,v_*,c,c_*)}/{\e} \to 0,
\end{aligned}
\]
where
\begin{equation}\label{eq:phi}
	\Phi_\delta(c):=
	\begin{cases}
		\displaystyle\frac{\mu}{2\delta} \left(1- \left( \frac{ c}{\bar c}\right)^{-\delta}  \right)c,\quad 0<\delta\leq 1\\\\
		\displaystyle	\frac{\mu}{2} \text{ln}\left( \frac{c}{\bar c}\right)c,\,\qquad \delta\to 0.
	\end{cases}
\end{equation} 
Hence, in the limit $\e\to 0$ the reminder \eqref{eq:remind} vanishes and the equation \eqref{eq:weakfp} collapses to the weak form of the following equation 
\begin{equation}\label{fpeq}
	\displaystyle\frac{\partial f}{\partial \tau}=-\frac{\partial   \left(\mathcal{E}[f](v,c,\tau) f \right)}{\partial v} +  \frac{\partial ( \Phi_\delta(c)f)}{\partial c}+ \frac 12 \sigma^2 \frac{\partial^2 (D^2(v,c)f)}{\partial v^2}  + \frac 12 \nu ^2 \frac{\partial^2( c^2f)}{\partial c^2}.
\end{equation} 
Equation \eqref{fpeq} is the model we will use in the sequel to describe the time evolution of opinion formation over a social network. In particular, in the final part of Section \ref{nume} focusing on a specific platform, namely Twitter, we will fit the parameter in our model with experimental data with the scope of describing a realistic phenomenon.

\subsection{On the steady state solution for the opinion distribution}
In the general case, the steady state of equation \eqref{fpeq} is not known. However, it is possible to compute an explicit formula for the asymptotic solution under some particular assumptions. Let assume that $D(v,c) = 1 - v^2$ and $P(v,v_*,c,c_*)=1$ so that
\begin{equation*}\begin{array}{rcl}
		\mathcal{E}[f](v,c,\tau) &=& \displaystyle\alpha\left(\int_{I\times\R_+} v_* f(v_*,c_*,\tau)\,dv_*\,dc_* - v \right) 
		=\alpha \left( m_v(\tau) - v \right).
	\end{array}
\end{equation*}
In this situation, one can look to solutions of type $f(v,c,t) = g(v,t)h(c,t)$ leading to asymptotic of the form $f_\infty(v,c)=g_\infty(v) h_\infty(c)$, where $h_\infty$ is the asymptotic state of the social contact distribution derived in Section \ref{know5}. Under the above hypothesis,  the Fokker-Planck equation \eqref{fpeq} can be rewritten as
\begin{equation}\begin{array}{rcl}
		\displaystyle\frac{\partial g}{\partial \tau}h + \displaystyle\frac{\partial h}{\partial \tau}g &=&-\displaystyle\alpha\frac{\partial ( (m_v(\tau) - v) g)}{\partial v}h + \frac{\mu}{2}  \frac{\partial \left(c \text{ln} \left(\frac{c}{\bar c}\right)h\right)}{\partial c}g \\[6pt]
		& & \displaystyle+  \frac 12 \sigma^2 \frac{\partial^2 ((1-v^2)^2g)}{\partial v^2} h + \frac 12 \nu^2  \frac{\partial^2( c^2h)}{\partial c^2}g,
	\end{array}
\end{equation} 
leading to
\begin{multline}\label{fpeq2}
	\left(\frac{\partial g}{\partial \tau} + \alpha\frac{\partial ( (m_v(\tau)  - v) g)}{\partial v}- \frac 12 \sigma^2 \frac{\partial^2 ((1-v^2)^2g)}{\partial v^2} \right)h \\ +\displaystyle\left(\frac{\partial h}{\partial \tau} - \frac{\mu}{2}  \frac{\partial \left(c \text{ln} \left(\frac{c}{\bar c}\right)h\right)}{\partial c} -\frac 12 \nu^2  \frac{\partial^2( c^2h)}{\partial c^2} \right)g= 0.
\end{multline}
Non-trivial solution for equation \eqref{fpeq2} are retrieved for
\begin{equation}\label{eq:perg}
	\frac{\partial g}{\partial \tau} = - \alpha\frac{\partial ( (m_v(\tau) - v) g)}{\partial v} + \frac 12 \sigma^2 \frac{\partial^2 ((1-v^2)^2g)}{\partial v^2} 
\end{equation} 
and
\begin{equation}\label{eq:perh}
	\frac{\partial h}{\partial \tau} = \frac{\mu}{2}  \frac{\partial \left(c \text{ln} \left(\frac{c}{\bar c}\right)h\right)}{\partial c} + \frac 12 \nu^2  \frac{\partial^2( c^2h)}{\partial c^2}.
\end{equation} 
Thus stationary solutions for \eqref{fpeq} are of the form $f_\infty(v,c) = g_\infty(v)h_\infty(c)$ as claimed before, where for \eqref{eq:perh} we obtain the log-normal distribution \eqref{eq:logn}, while in order to compute the stationary solution to \eqref{eq:perg}, one has to solve
\begin{equation}\label{eq:ginf}
	\frac{\mathrm{d}((1-v^2)^2g_\infty)}{\mathrm{d}v} = \frac{2\alpha}{\sigma^2}( (\bar m_v - v) g_\infty),\qquad \bar m_v =  \int_{I\times\R_+} v_* g_\infty(v_*)\,dv_*.
\end{equation} 
The solution to \eqref{eq:ginf} is given by
$$
g_\infty (v) = K_\infty(1+v)^{-2+\alpha\bar m_v/2\sigma^2}(1-v)^{-2-\alpha\bar m_v/2\sigma^2}\text{exp}\left\{ -\frac{\alpha(1-\bar m_v v)}{\sigma^2(1-v^2)}\right\},
$$
where $K_\infty$ is a normalization constant, such that the total mass of $g_\infty$ is equal to $1$. Figure \ref{fig:ginfty} shows the comparison between the analytical profile of $g_\infty(v)$ obtained in the case $\alpha = 0.1$, $\sigma^2 = 0.1$ and $\alpha = 0.25$, $\sigma^2 = 0.05$, $\bar c = 1$, $\mu = 0.1$, $\nu^2 = 0.0125$ and the numerical simulations of equation \eqref{kine-ww} through a Monte Carlo method which details are outlined in the appendix \ref{sec:appendix}. In the simulation we choose the scaling parameter $\epsilon = 0.01$ in order to retrieve the Fokker-Planck asymptotic from the Boltzmann-type equation \eqref{kine-ww}.
\begin{figure}[h!]
	\includegraphics[width=0.5\textwidth]{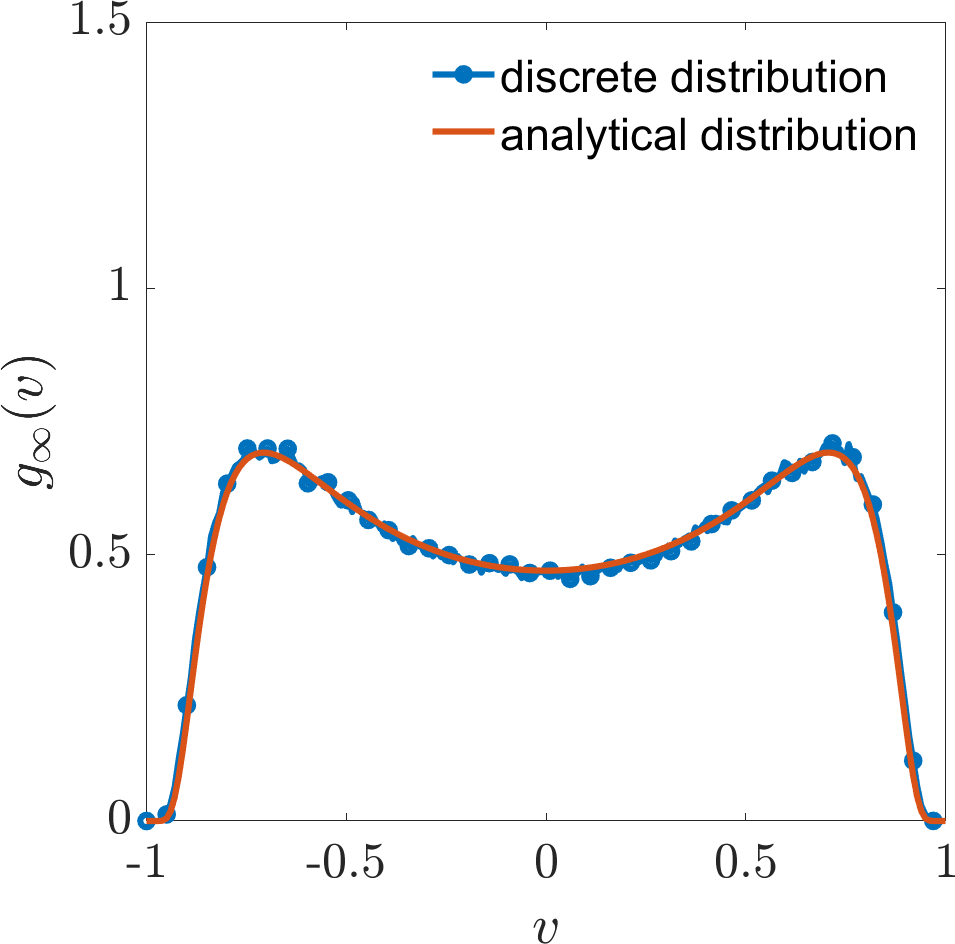} \quad\includegraphics[width=0.5\textwidth]{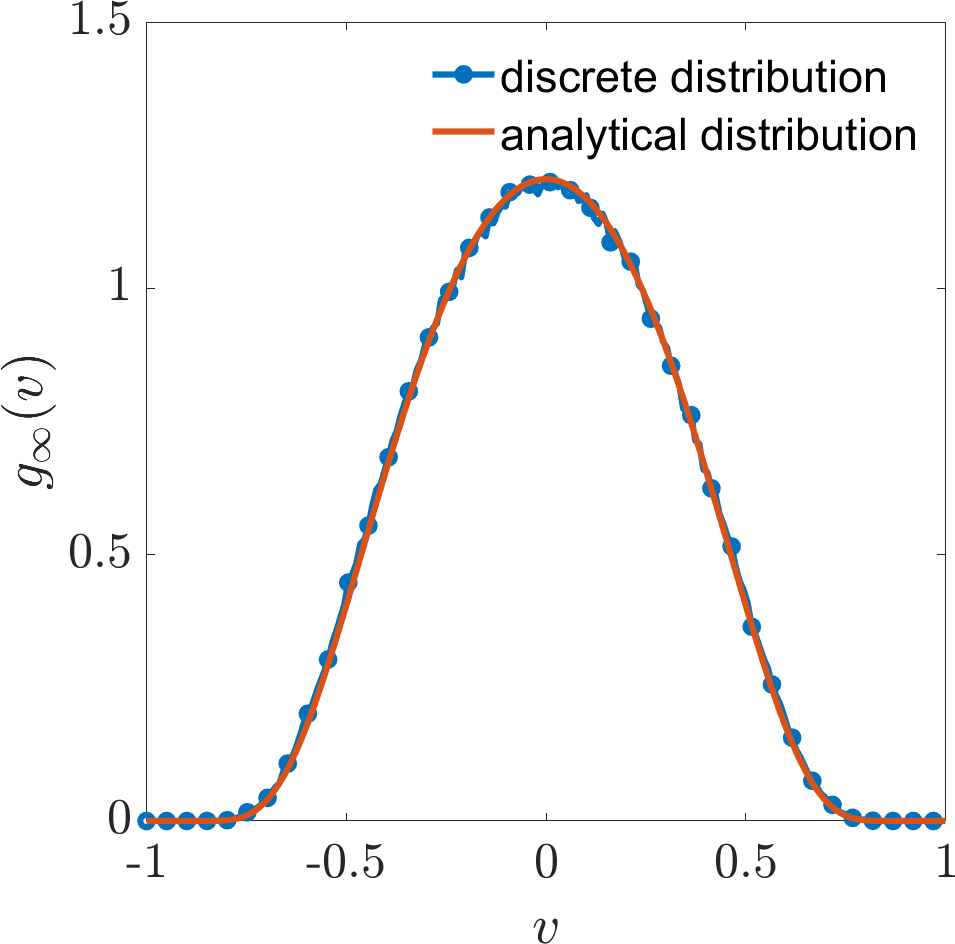} 
	\caption{Profiles of the steady state solution $g_\infty(v)$ and its numerical approximation in the case of $\sigma^2/\alpha = 1$ (left) and $\sigma^2/\alpha = 0.2$ (right), both with scaling parameter $\epsilon = 0.01$}\label{fig:ginfty}
\end{figure}

\section{Numerical experiments}\label{nume}
In order to get insights about the qualitative behavior of our new model and to validate it, we perform in the sequel different numerical simulations using a Monte Carlo-like approach for approximating the Boltzmann equation \eqref{kine-ww} in the Fokker-Planck regime \eqref{fpeq}. The details of the numerical scheme employed are given in the Appendix \ref{sec:appendix}. We will always assume, if not otherwise stated, that in the rest of the Section, the interaction kernel $P$ can be expressed as 
\begin{equation}\label{eq:kernel}
	P(v,v_*,c,c_*)=H(v,v_*,c,c_*)K(c,c_*)
\end{equation} 
for various choices of functions $H$ and $K$. Moreover, based on the experimental finding of Section \ref{dataextr}, we will restrict ourselves to the case $\delta \to 0$ in \eqref{eq:phi} which corresponds to the situation in which a log-normal distribution of connections describes the network at equilibrium.

\subsection{Qualitative behavior}
We start by discussing some qualitative behaviors of the model presented through the use of numerical simulations. We first consider a bounded confidence model, in which the propensity to consensus is influenced by the number of connections. Furthermore, in a second test, we consider a confidence bound depending also on the number of contacts, where we compare homogeneous with heterogeneous cases.
In a final test, we study a case in which a part of the population acting on a social network is composed of almost inflexible individuals. 

\subsubsection{Test 1: Bounded confidence model} 
In this first test we build up a bounded confidence model \cite{Deffuant,hegselmann2002opinion} by performing the following choices 
\begin{equation}\label{eq:kernels}
	H(v,v_*,c,c_*)=\chi_{\{|v-v_*|< \Delta\}}(v_*), \qquad K(c,c_*)=\frac{c_*^2}{c^2+c_*^2},
\end{equation} 
where $\chi(\cdot)$ is the indicator function and $\Delta$ is a positive constant. The 
diffusion part is weighted by $D(v,c,c_*) = 1-v^2$. In the chosen setting, the interactions take place only if two individuals have sufficiently close opinions. Moreover, we give a higher relevance, in driving the opinions process, to the agents having more connections, i.e. the influencers. Instead, individuals with few followers are less likely to be able to change the point of view of the other participants while prone to change their own opinion. 
In order to stress the importance of the presence of contacts and their relevance in modifying the evolution of the joint density $f(v,c,t)$, we start from two different initial data: in the first simulation the initial data is given by 
$$
f_0(v,c) = \frac{1}{2}h_\infty(c),
$$
meaning that the opinion is uniformly distributed in the interval $I=[-1,1]$, and the contacts are at the equilibrium \eqref{eq:logn} with parameters $\lambda = 5$ and $\sigma = 1.56\cdot 10^{-2}$. In Figure \ref{fig:test2bis}, the initial data is shown in the first image, followed by the images of the resulting density $f(v,c,t)$ for different times, namely $t = 4, 8, 12, 16, 20$. We sample $N_s=10^5$ agents to simulate model \eqref{kine-ww}, with scaling parameter $\epsilon = 0.01$. In Figure \ref{fig:test2bis} we clearly see the segmentation of the opinion which is due to the presence of a bounded confidence interaction with $\Delta = 0.55$, but over time the two clusters merge again, the consensus is reached and we notice the formation of a single cluster in $v=0$.

In the second simulation, we suppose that contacts are still at the equilibrium \eqref{eq:logn} with the same parameters $\lambda$ and $\sigma$ as in the previous simulation, but the opinions are now distributed so that agents with a low number of connections have an opinion closer to $-1$, while agents with a higher number of contacts have an opinion closer to $+1$. In Figure \ref{fig:test2bis2}, the initial data is shown in the first image, followed by the images of the resulting density $f(v,c,t)$ for $t = 4, 8, 12, 16, 20$. We see the emergence of two clusters but the symmetry of the previous case is lost: agents with a lower number of contacts are influenced by agents with a higher bound of contacts and behave like followers, whereas agents with a higher bound of contacts are less influenced by lower-contact agents. We stress that the dynamics observed in these two cases are different from the standard dynamics obtained with a bounded confidence opinion model where connections do not play a role. In fact, in this case, the steady state solution is represented by a bimodal distribution similar to the one obtained in Figure \ref{fig:test2bis2} for $t=8$.
\begin{figure}[h!]
	\includegraphics[width=0.3\textwidth]{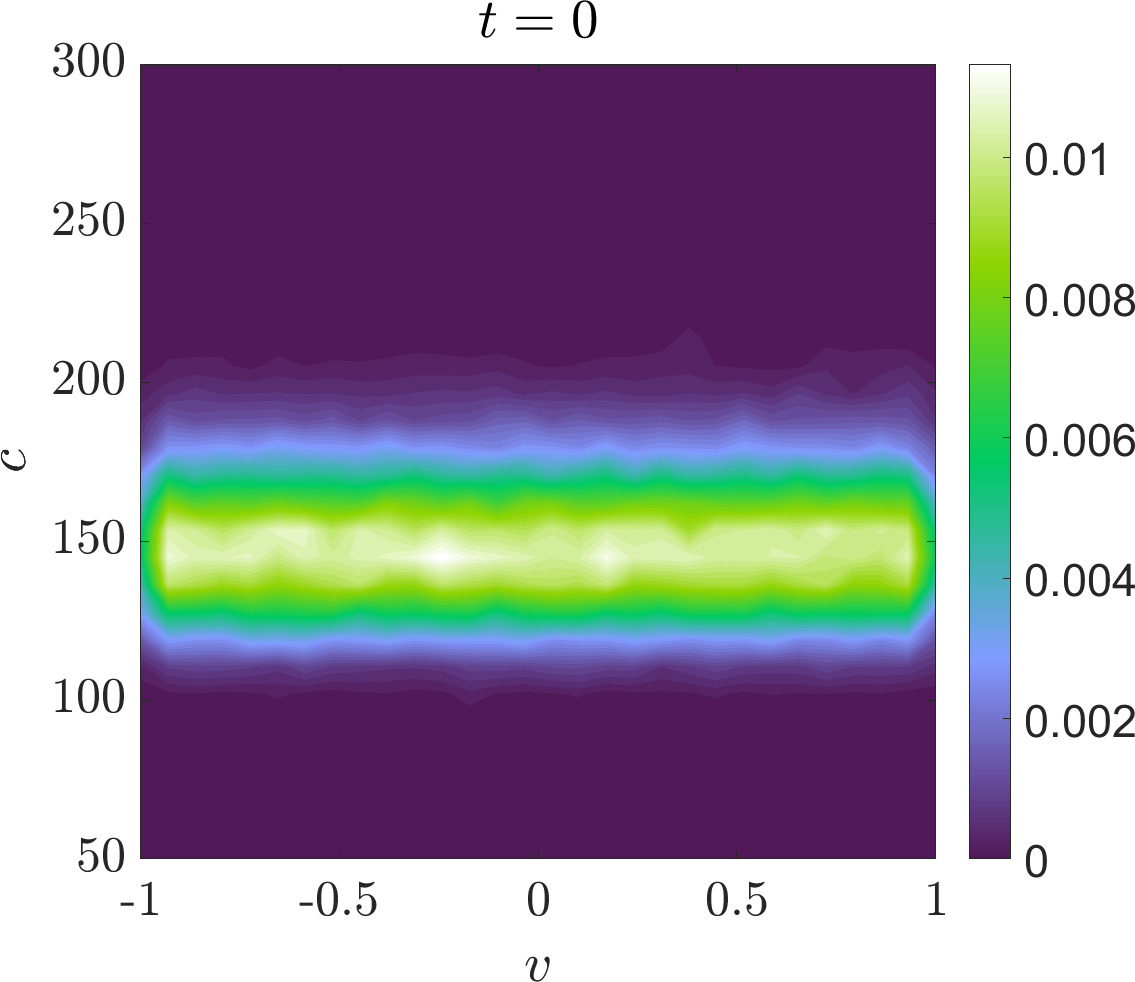} \quad\includegraphics[width=0.3\textwidth]{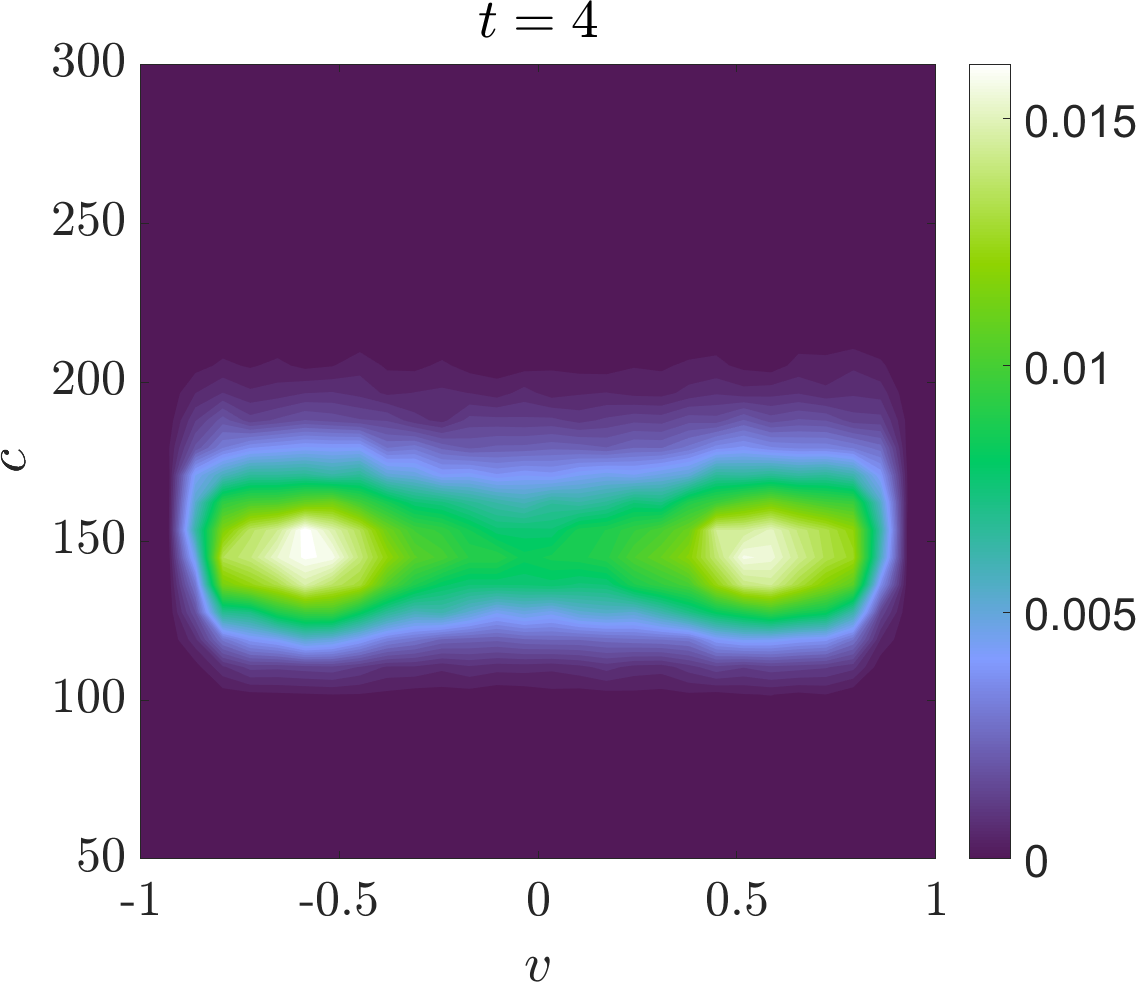} \quad\includegraphics[width=0.3\textwidth]{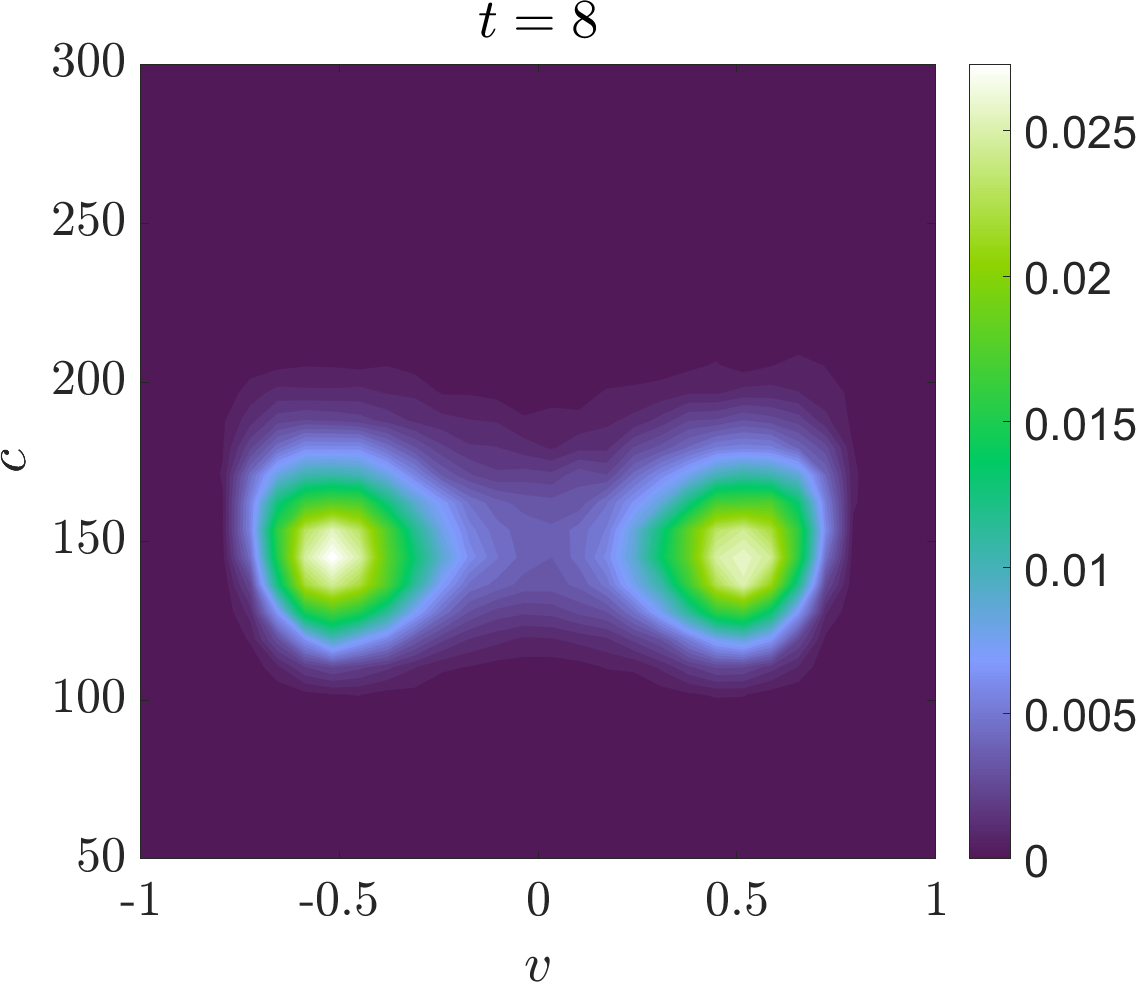} 
	\newline
	\newline
	\includegraphics[width=0.3\textwidth]{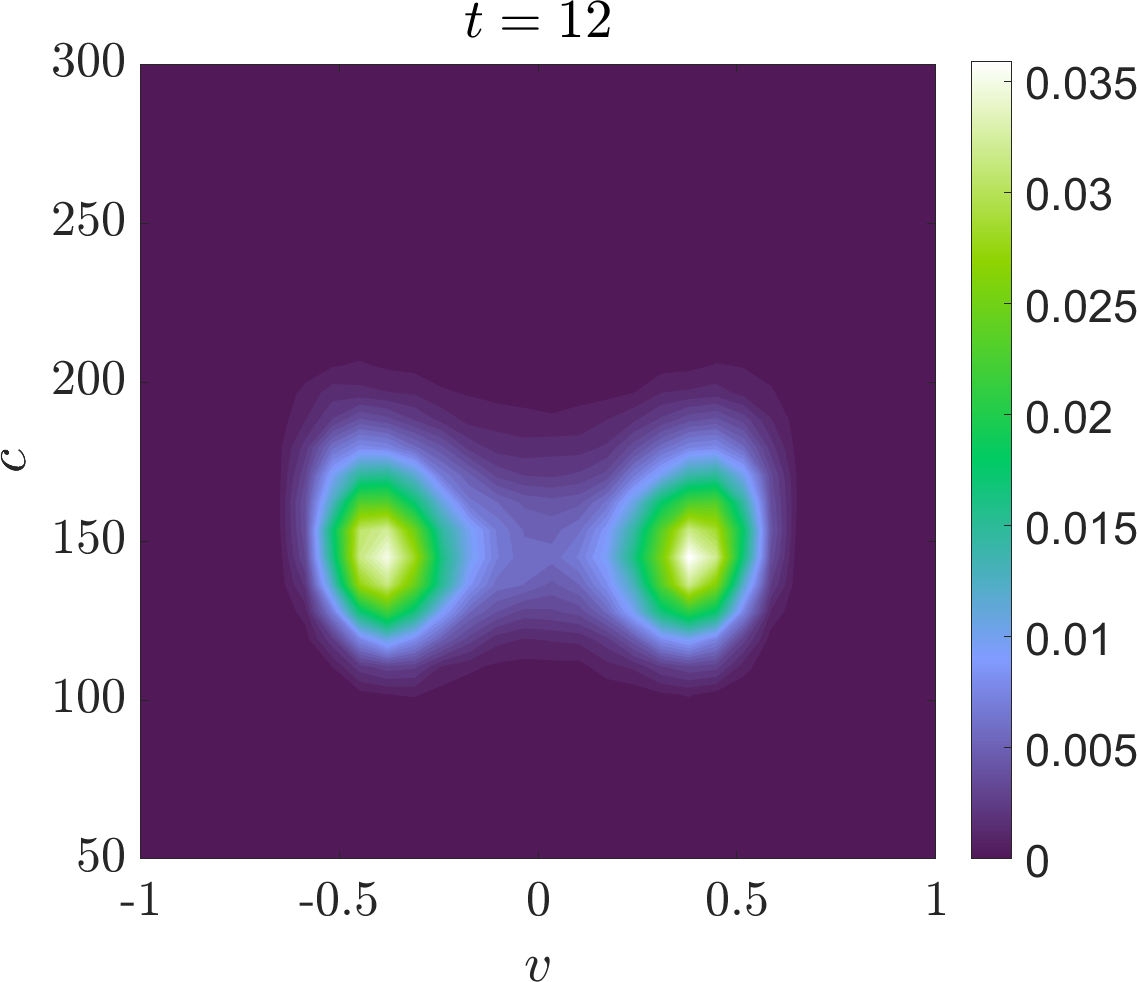} \quad\includegraphics[width=0.3\textwidth]{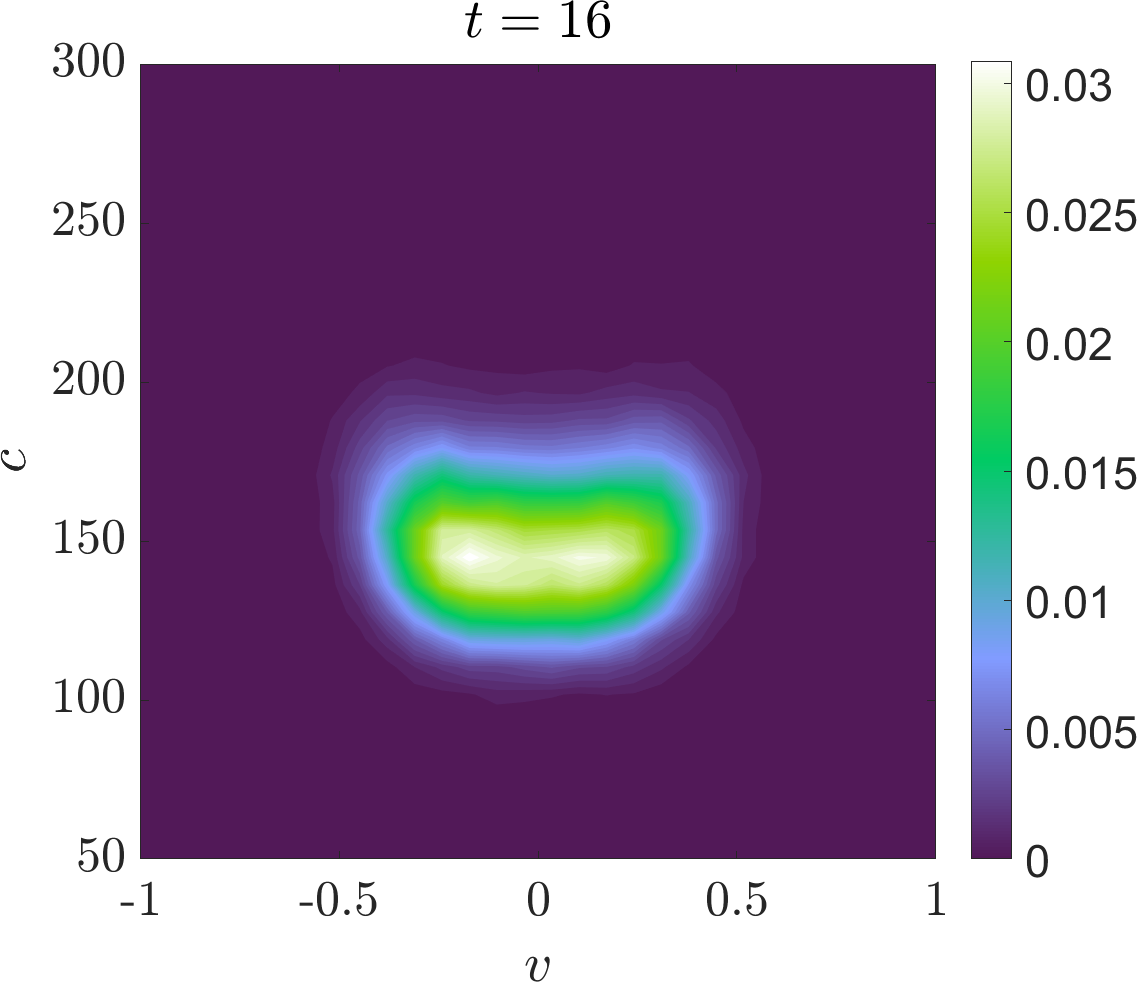} \quad\includegraphics[width=0.3\textwidth]{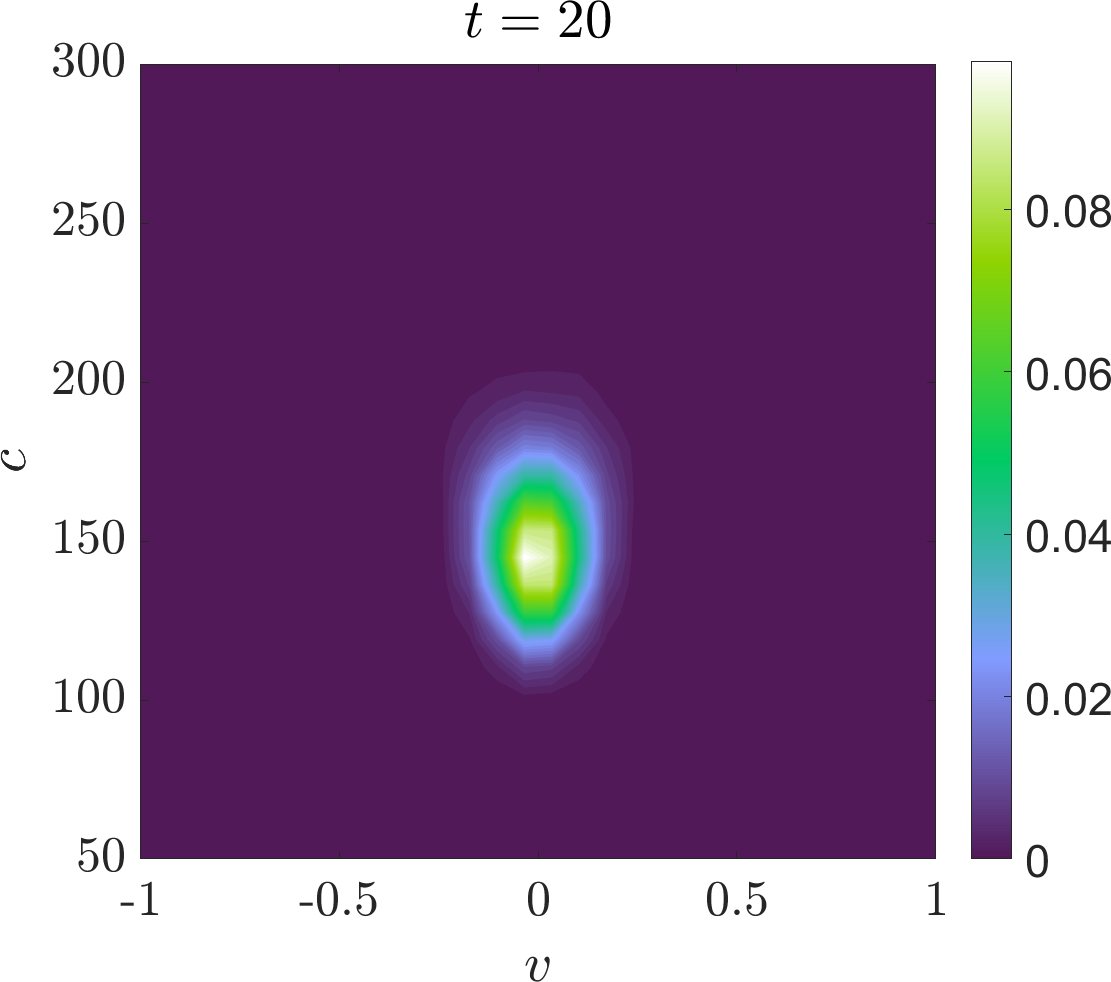} 
	\caption{Test $1$, $\sigma^2/\alpha = 0.005$. The pictures show the time evolution of the distribution function $f(v,c,t)$ for $t=0,4,8,12,16,20$ for a homogeneous distribution of the number of connections with respect to opinions. After the emergence of two clusters the agents reach consensus at the final time
	}\label{fig:test2bis}
\end{figure}

\begin{figure}[h!]
	\includegraphics[width=0.3\textwidth]{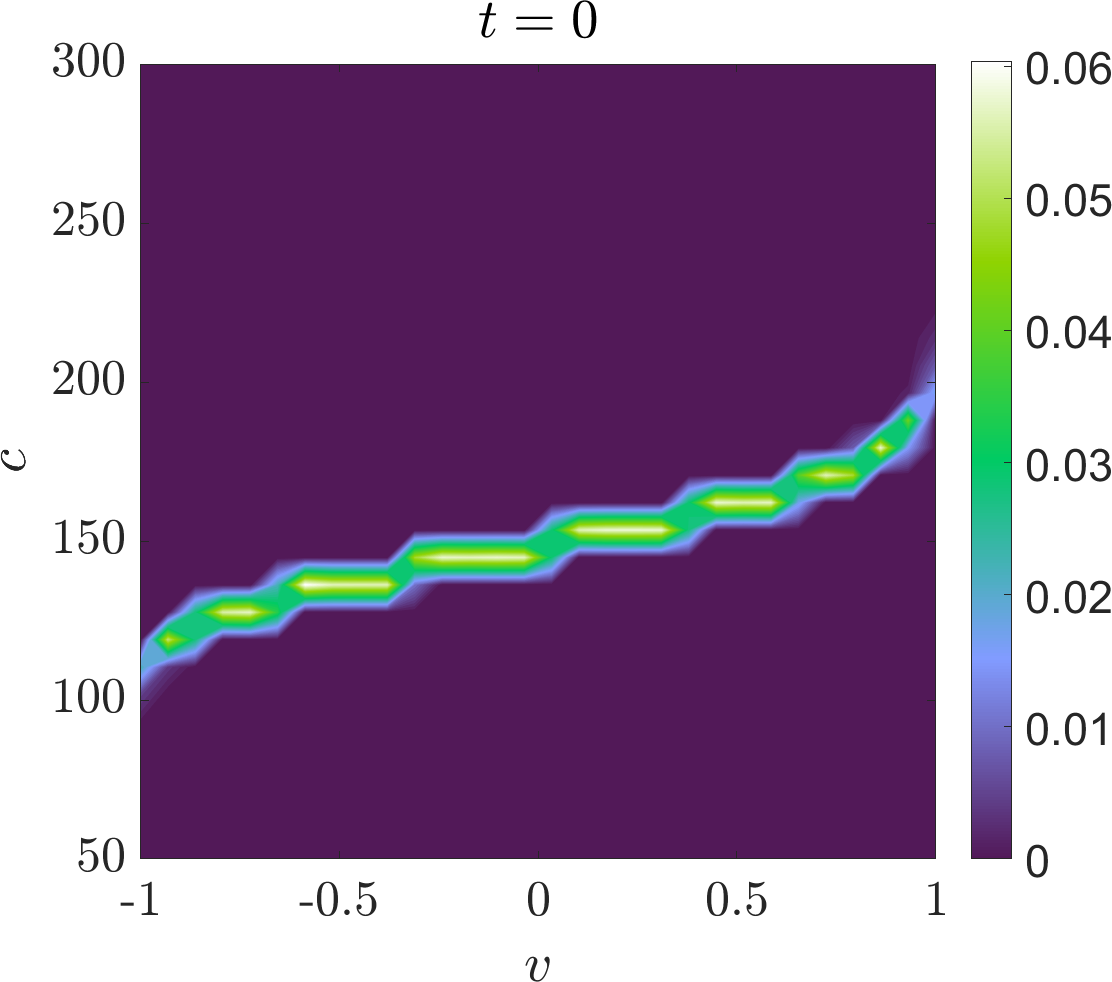} \quad\includegraphics[width=0.3\textwidth]{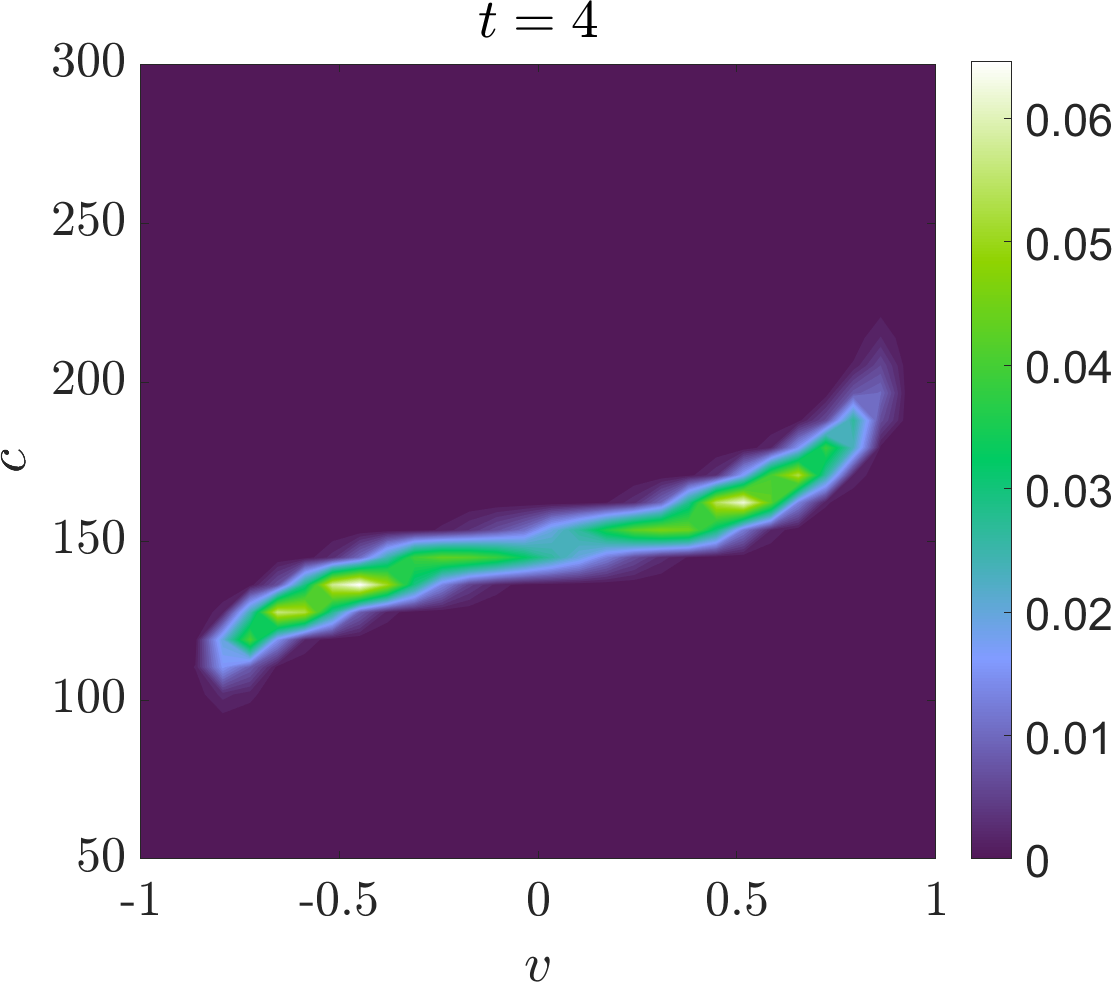} \quad\includegraphics[width=0.3\textwidth]{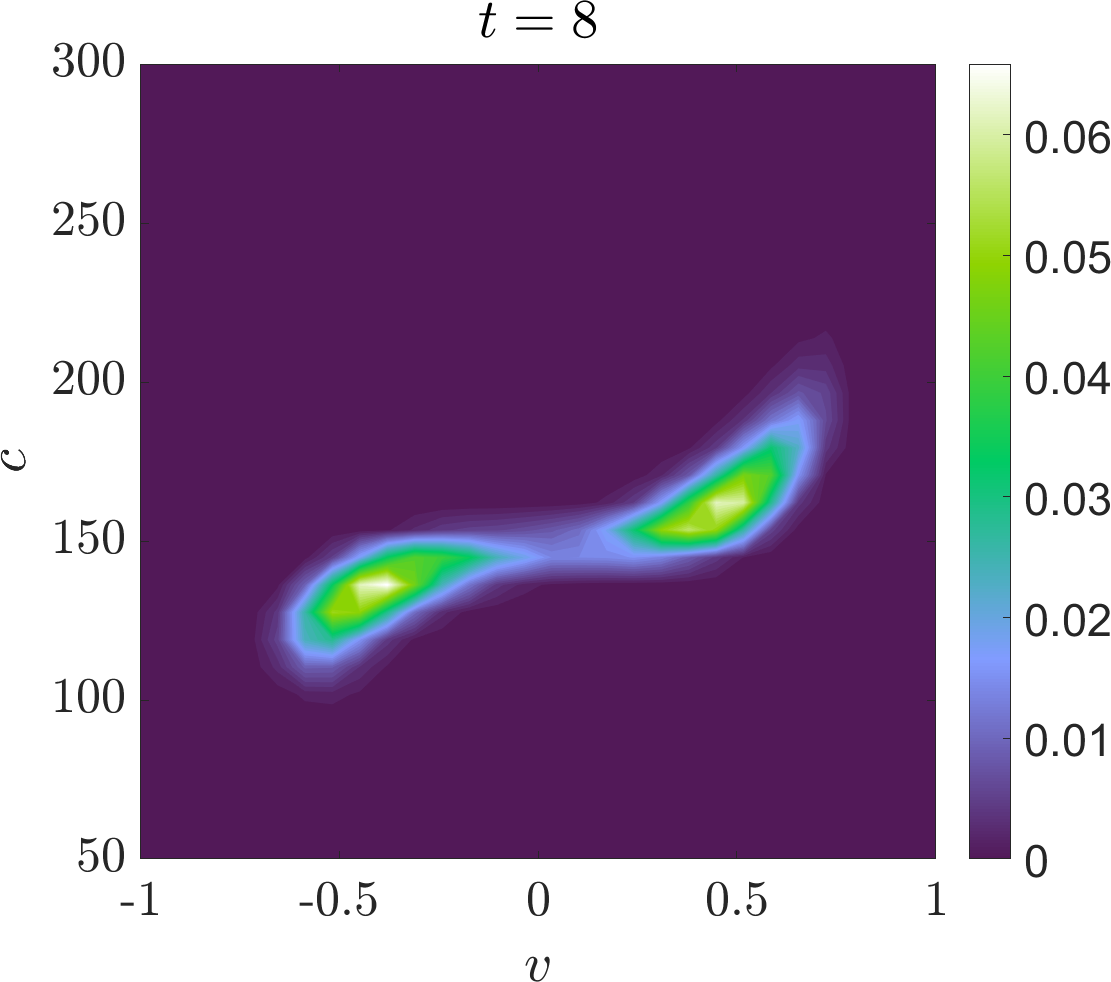} 
	\newline
	\newline
	\includegraphics[width=0.3\textwidth]{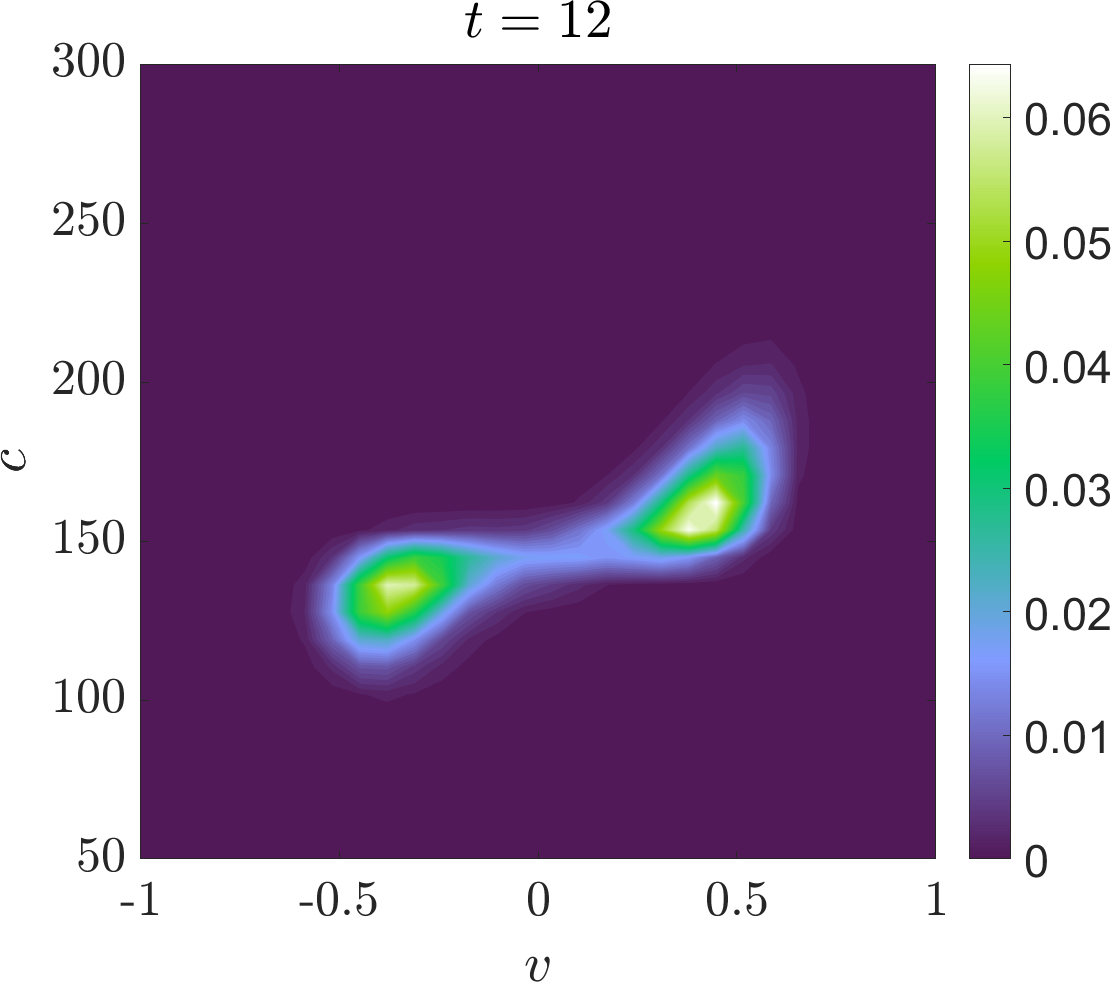} \quad\includegraphics[width=0.3\textwidth]{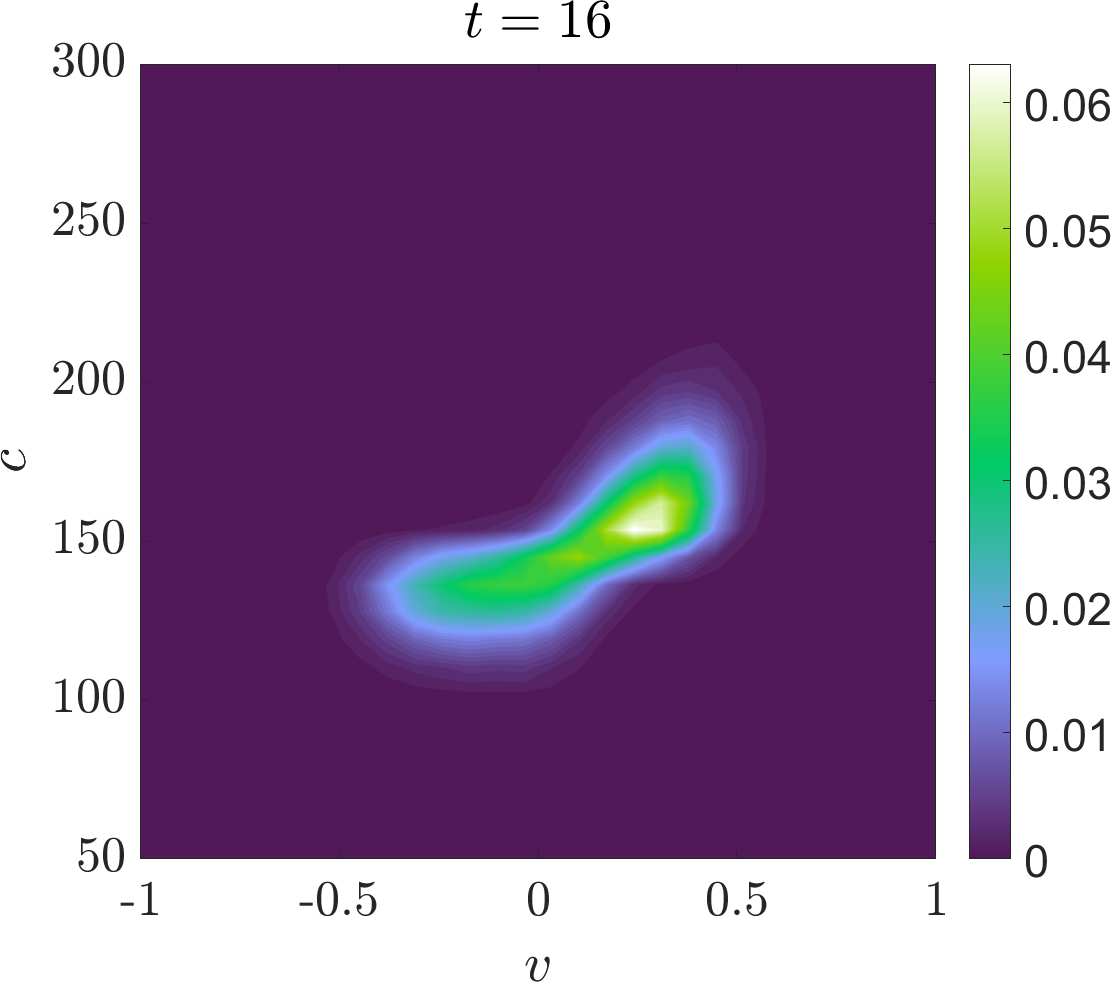} \quad\includegraphics[width=0.3\textwidth]{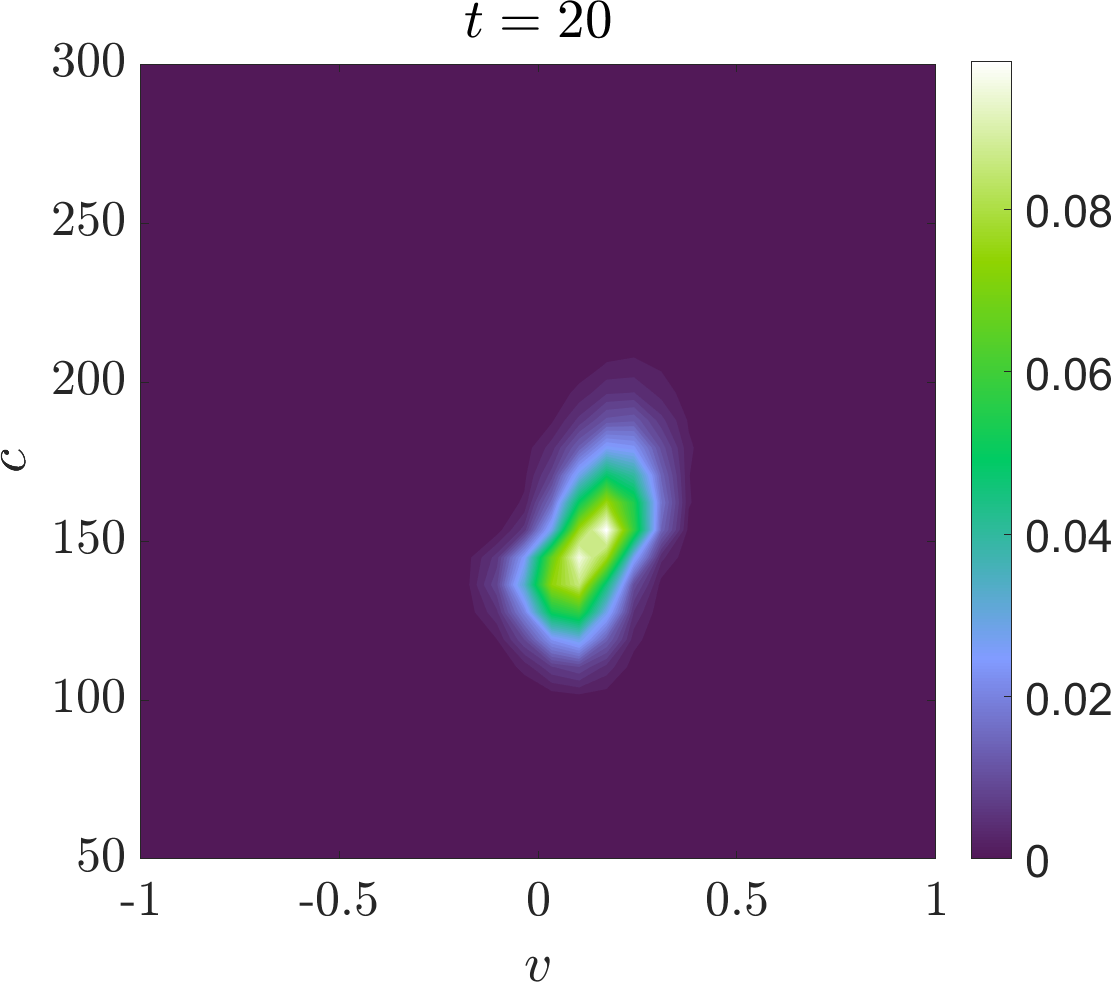} 
	\caption{Test $1$, $\sigma^2/\alpha = 0.005$. The pictures show the time evolution of the distribution function $f(v,c,t)$ for $t=0,4,8,12,16,20$ in the case of a non homogeneous distribution of the number of connections with respect to opinions. After the emergence of two clusters the agents reach consensus at the final time in the positive opinion region
	}\label{fig:test2bis2}
\end{figure}

\subsubsection{Test 2: Heterogeneous confidence bound}
In this second test, we modify the previous experimental setting introducing confidence bound $\Delta \equiv\Delta(c,c_*)$  in \eqref{eq:kernels} as a function of the contact numbers of the interacting agents. We assume that the confidence bound is above the threshold $1/2$ if the number of contacts of the interacting agent is larger, i.e. $c_*>c$, on the other hand, for a lower number of contacts $c_*<c$ the confidence bound becomes lower than $1/2$. To model this behavior we consider the following function
\begin{equation}\label{eq:hetconf}
	\Delta (c,c_*) = \frac{c_*}{c+c_*}.
\end{equation} 
We assume as initial datum $f^0(v,c)$ the uniform distribution on $ [-1,1] \times [0,1]$. Simulations of the opinion dynamics \eqref{kine-ww} is performed using $N_s= 10^5$ agents and scaling parameter $\epsilon = 0.01$. The evolution of social contacts \eqref{k1}, which initially is not at the stationary state, is performed with parameters $\mu = 10^{-1}$ and $\nu^2 = 0.0125$.

We compare the case with heterogeneous confidence bound as in \eqref{eq:hetconf},  with the case with homogeneous confidence bound equal to $\Delta(c,c_*)=1/2$. In Figure \ref{fig:test3bis} we report the resulting densities $f(v,c,t)$ for $t = 2$ on the left, $t=4$ in the center, and $t=8$ on the right, where the rows are relative to the homogeneous and to the heterogeneous case, respectively for top row and bottom row.
In the top row, we observe the emergence of two clusters one centered around $-0.5$ and the second one centered around $0.5$, independent of the number of contacts.
The bottom row reports the case with heterogeneous bound $\Delta(c,c_*)$,  where
we observe that agents with a high level of connections tend to maintain their opinions in time due to the lower values of $\Delta(c,c_*)$, instead agents with lower connections reach consensus around the central opinion $v=0$, since $\Delta(c,c_*)$ is larger.

\begin{figure}[h!]
	\centering
	\centering
	\includegraphics[width=0.3\textwidth]{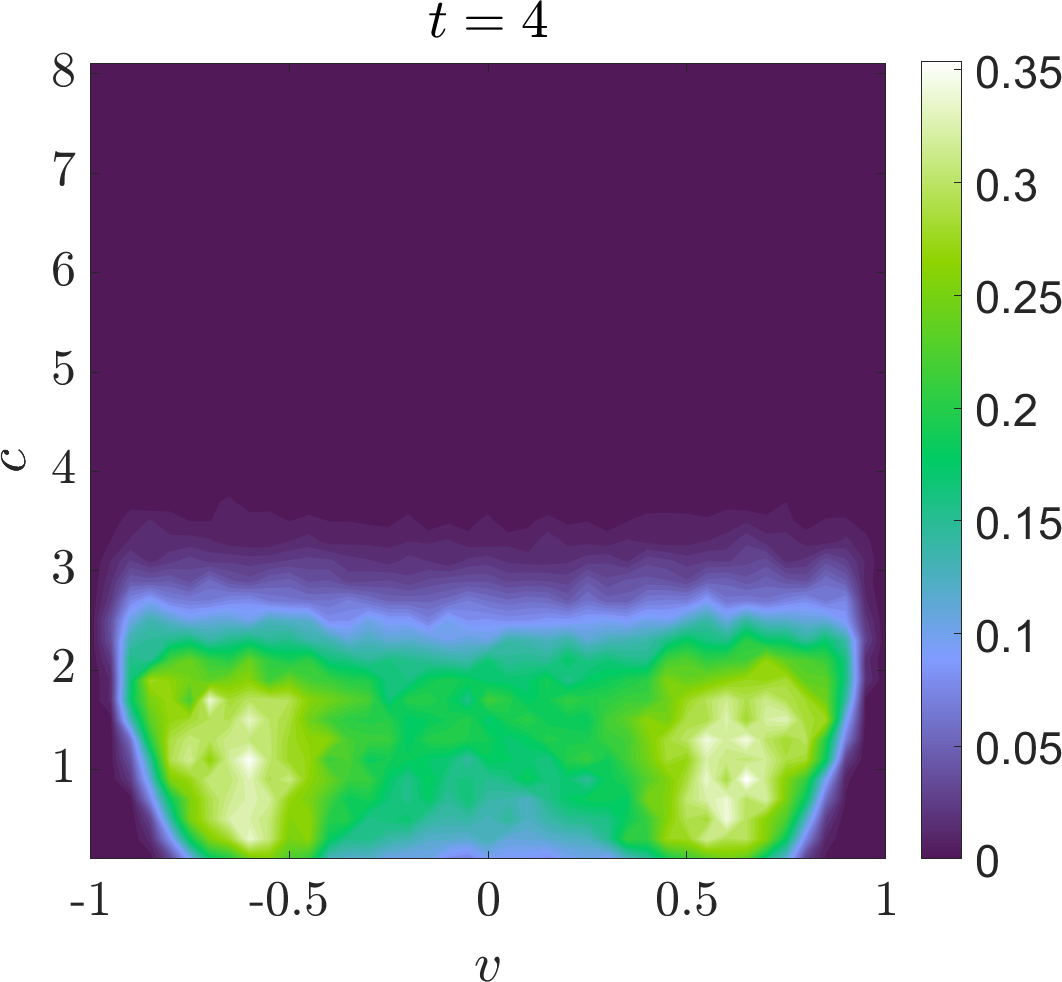} \quad\includegraphics[width=0.3\textwidth]{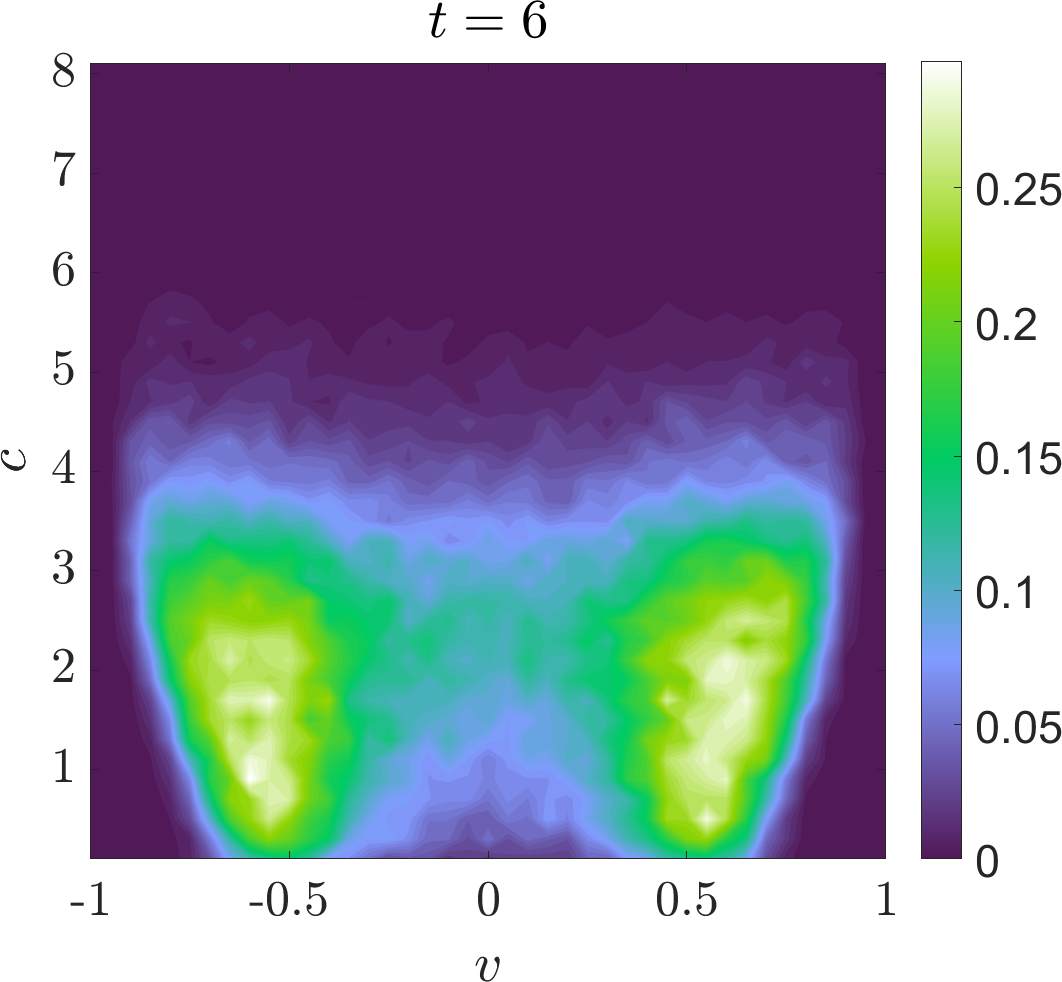} \quad \includegraphics[width=0.3\textwidth]{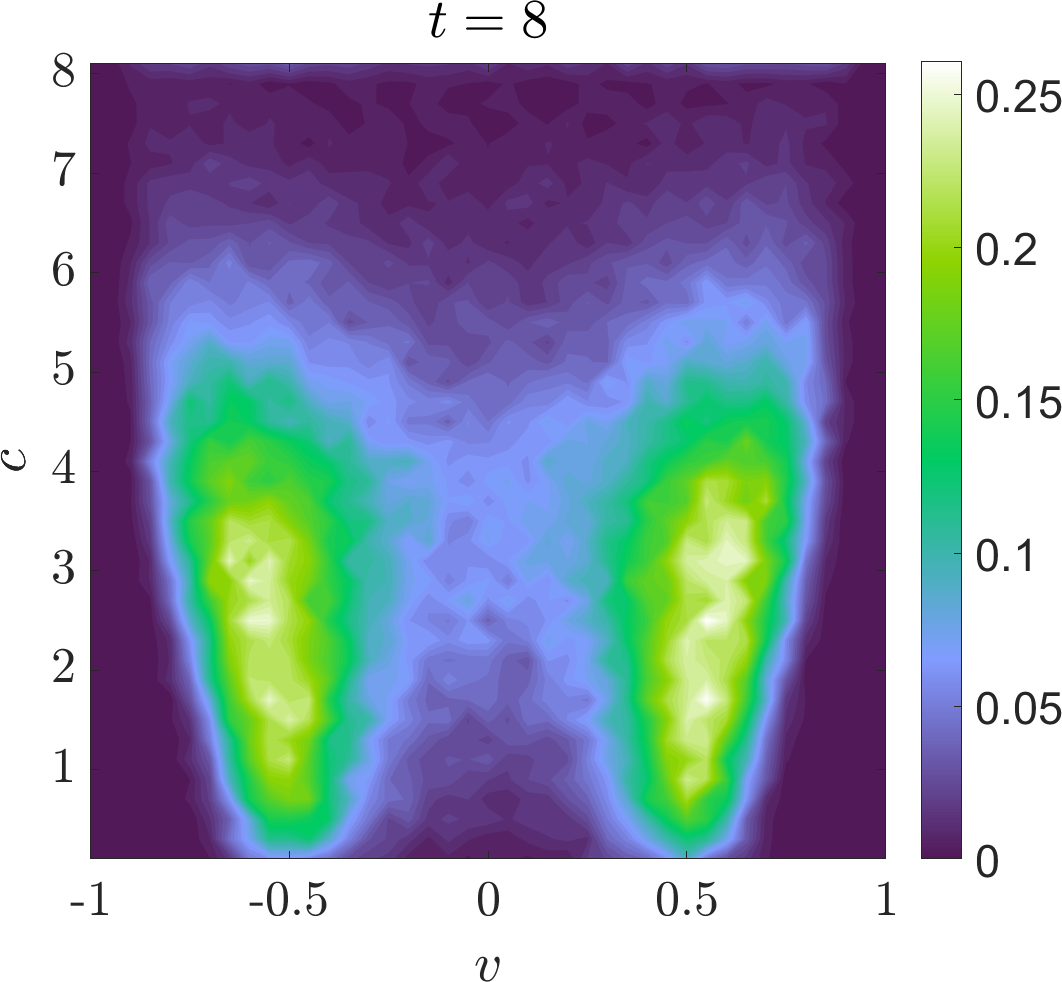} 
	\\
	\includegraphics[width=0.3\textwidth]{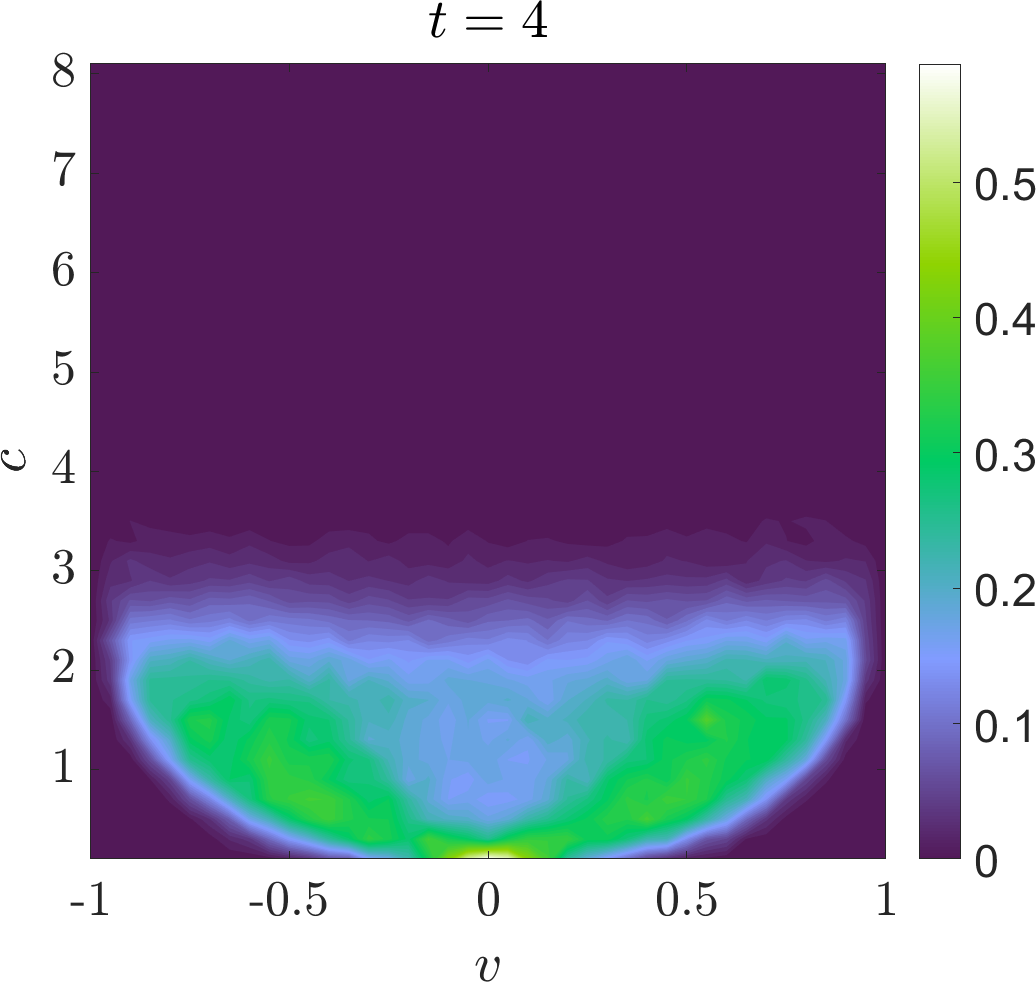} \quad\includegraphics[width=0.3\textwidth]{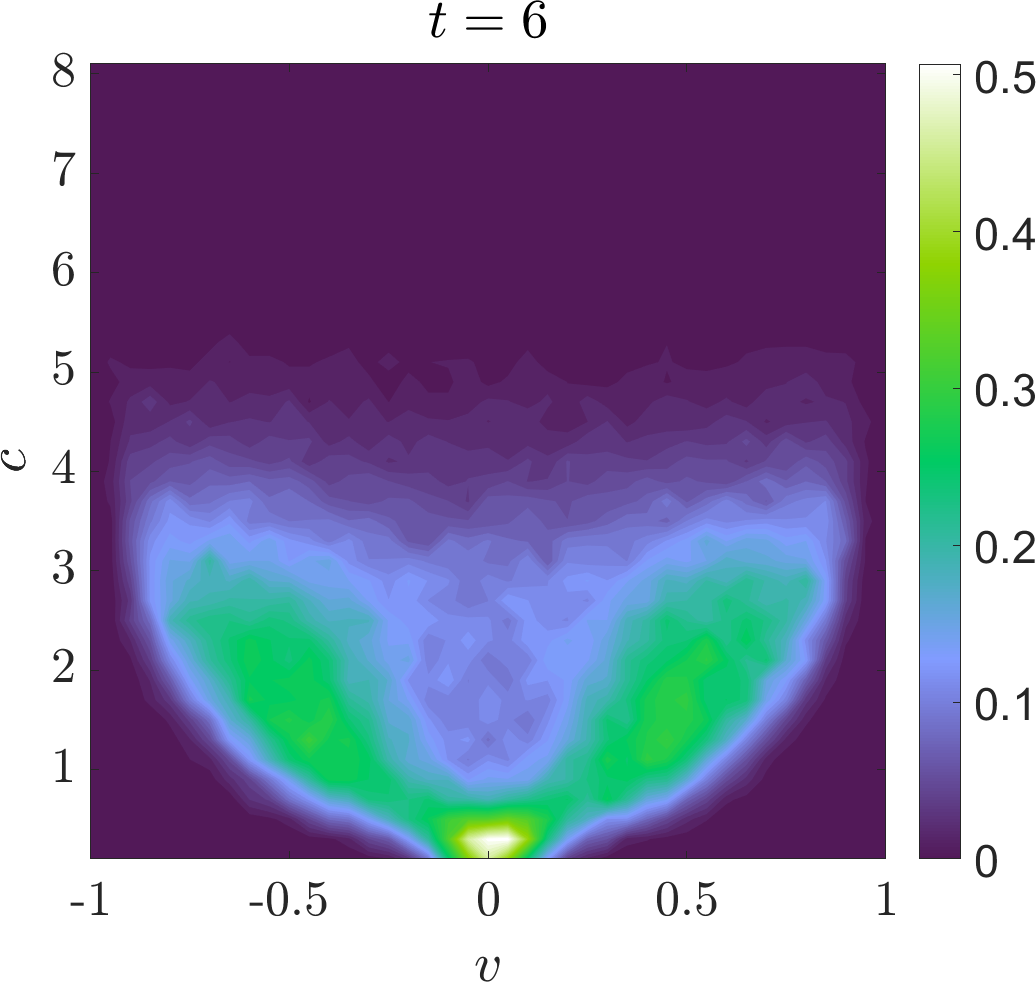} \quad \includegraphics[width=0.3\textwidth]{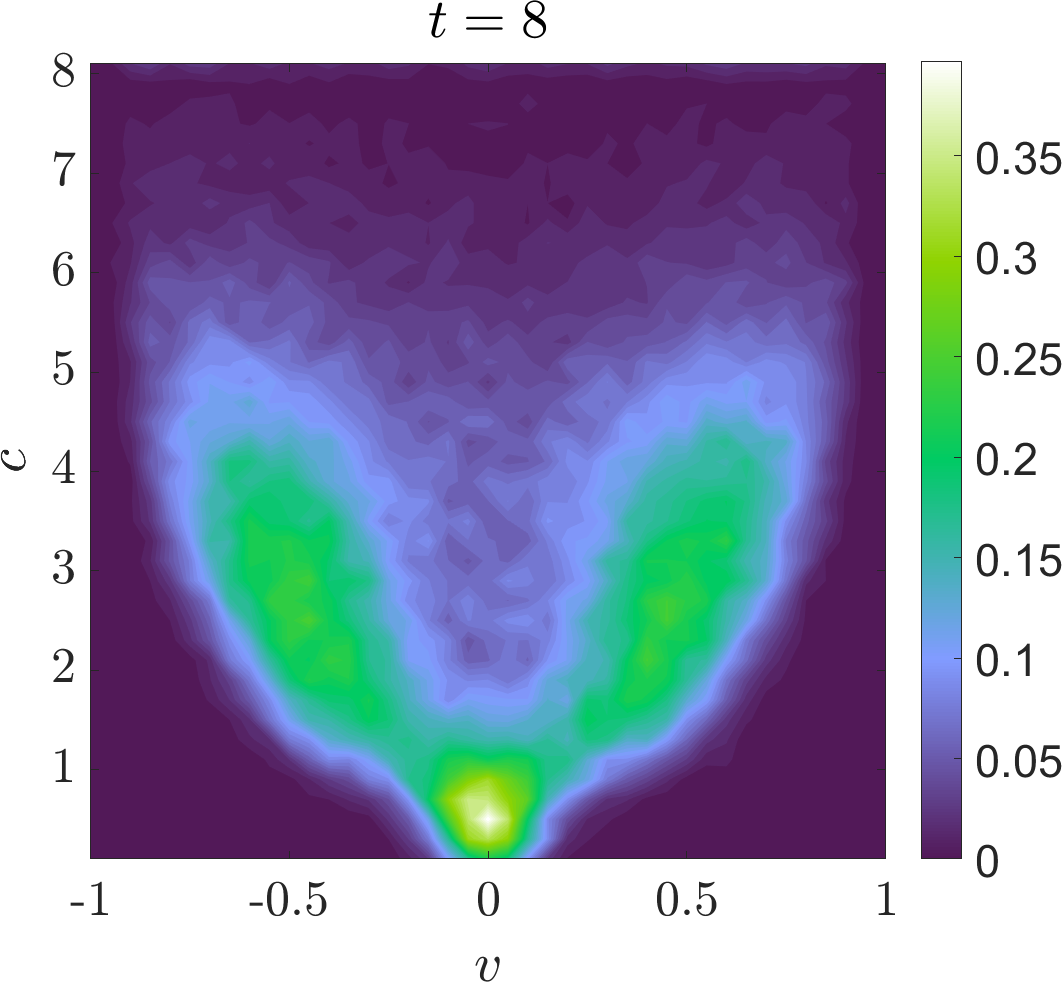} 
	\caption{Test $2$, $\sigma^2/\alpha = 0.005$. The pictures show the distribution $f(v,c,t)$ for $t=4$ (left), $t=6$ (center), and $t=8$ (right). Top row: constant bound of contacts ($\Delta(c,c_*)=0.5)$), two main clusters emerge at any bound of contacts.
		Bottom row: heterogeneous confidence bound ($\Delta(c,c_*)=$ in \eqref{eq:hetconf}), consensus is reached for agents with a low number of contacts, whereas for higher bound of contacts two main clusters emerge
	}\label{fig:test3bis}
\end{figure}


\subsubsection{Test 3: Sznajd-type dynamics}
We start now with a distribution of opinions given by the sum of two Gaussian distributions centered at different locations in $[-1,1]$ mimicking a clustering of opinion with respect to a given subject. The initial condition is
$$
f_0(v,c) = K_0\begin{cases}
	\frac{1}{30} \text{exp} \{-(v + \frac{3}{4})^2/(2\sigma_1^2)\},\quad& \text{if } 100 \leq c \leq 130 \\
	\frac{1}{30} \text{exp} \{-(v - \frac{3}{4})^2/(2\sigma_2^2)\},& \text{if } 170 \leq c \leq 200 \\
	0 &\text{otherwise,}\end{cases}
$$
with $K_0$ positive constant such that $f_0(v,c)$ has total mass equal to $1$. The interaction kernels are such that 
$$H(v,v_*,c,c_*) = (1-v^2), \qquad K(c,c_*) = \frac{c_*^2}{(c+c_*)^2},$$ and with a diffusion term proportional to $D(v,c,c_*) = 1-v^2$. The interaction function is chosen according to an approximation of Sznajd dynamics \cite{sznajd2000opinion,To06} in such a way that, individuals may interact with everyone on the social network while, alignment towards a given opinion, is more frequent for people having weak opinions and less probable for individuals having strong believes (positive or negative). The role played by the number of connections in the function $K(c,_*)$ is similar to the previous cases even if now its impact is more important: agents with a higher number of social media contacts tend to only slightly modify their opinion over time, while the ones with a low number of contacts are more influenced and tend to align their opinion with one of the most popular persons. The simulation uses $N_s = 10^5$ agents in the time interval $[0,T] = [0,6]$. Figure \ref{fig:test4bis} shows the initial condition on the left and the resulting density $f(v,c,t)$ for $t=3$ and $t=6$ respectively on the center and the right, using $\sigma_1^2=\sigma_2^2=0.005$ and a scaling parameter $\epsilon = 0.01$. For the evolution of the social contact dynamics \eqref{k1}, we account for the following parameters $\mu = 10^{-2}$ and $\nu = 3.54\cdot 10^{-2}$. We observe that agents with a lower number of contacts are strongly influenced by agents with a higher number of contacts, and behave like followers, whereas agents with higher-bound of contacts are mildly influenced by lower-contact agents. The result of such dynamics is that individuals with a negative opinion at the beginning are driven toward a positive one over time while people with a large number of contacts only slightly move toward the center without really modifying their believes.
%
\begin{figure}[h!]
	\centering
	\includegraphics[width=0.3\textwidth]{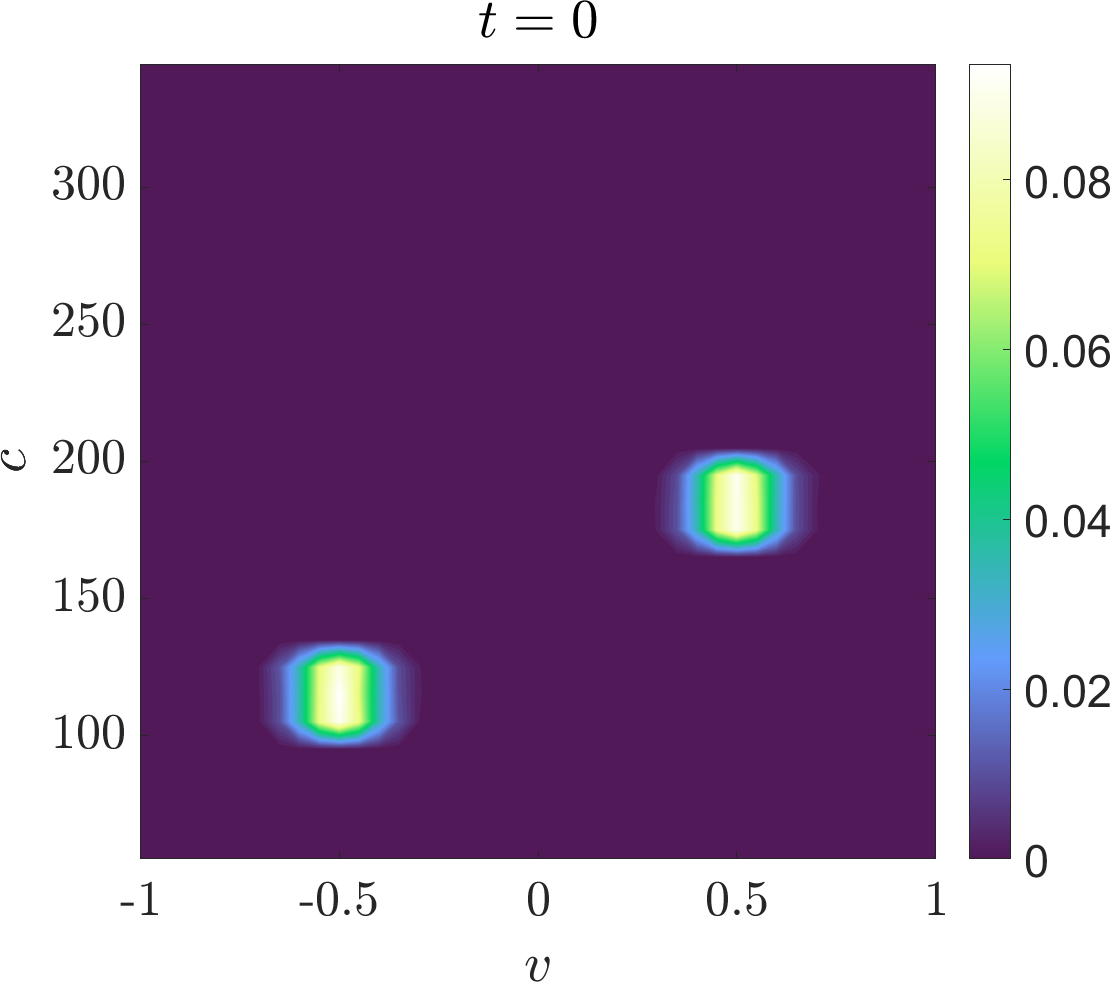} \quad\includegraphics[width=0.3\textwidth]{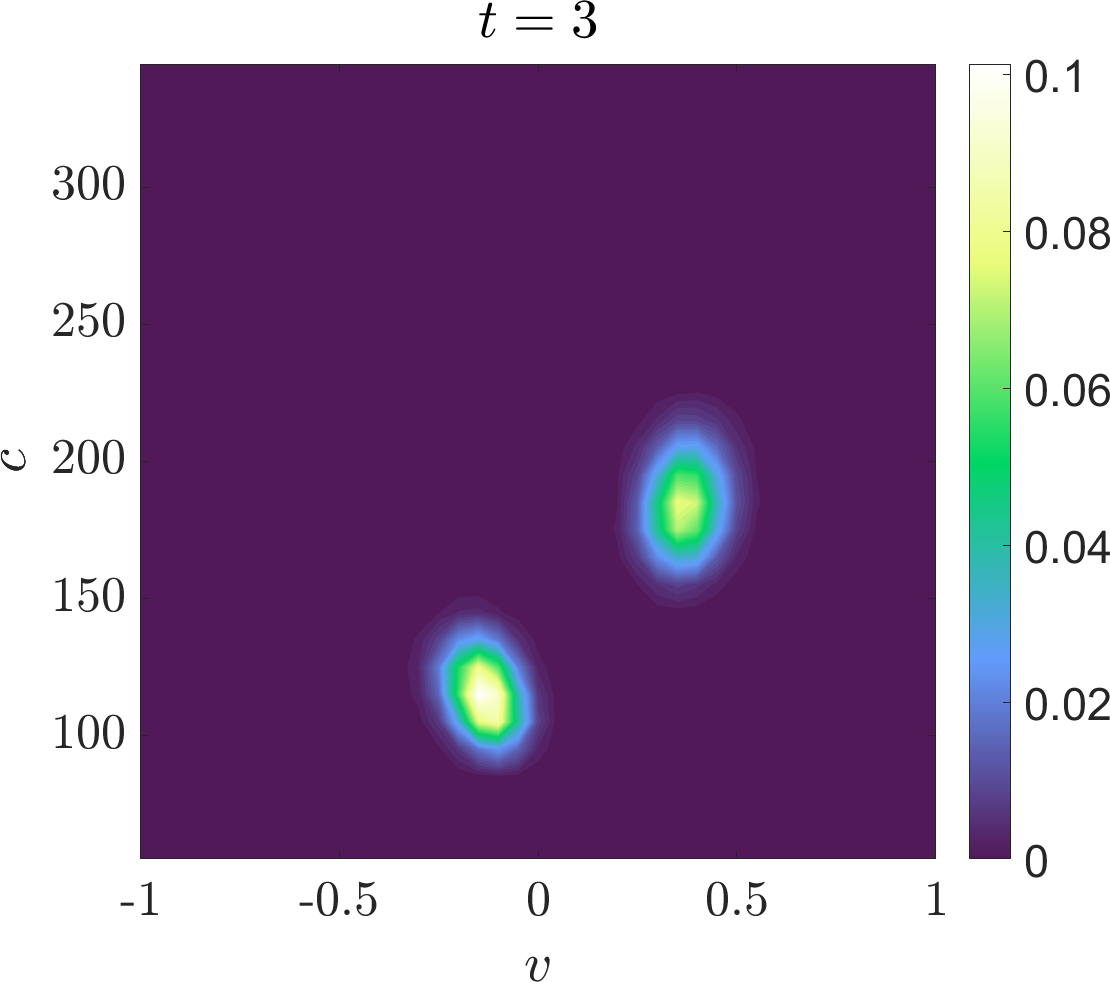} \quad\includegraphics[width=0.3\textwidth]{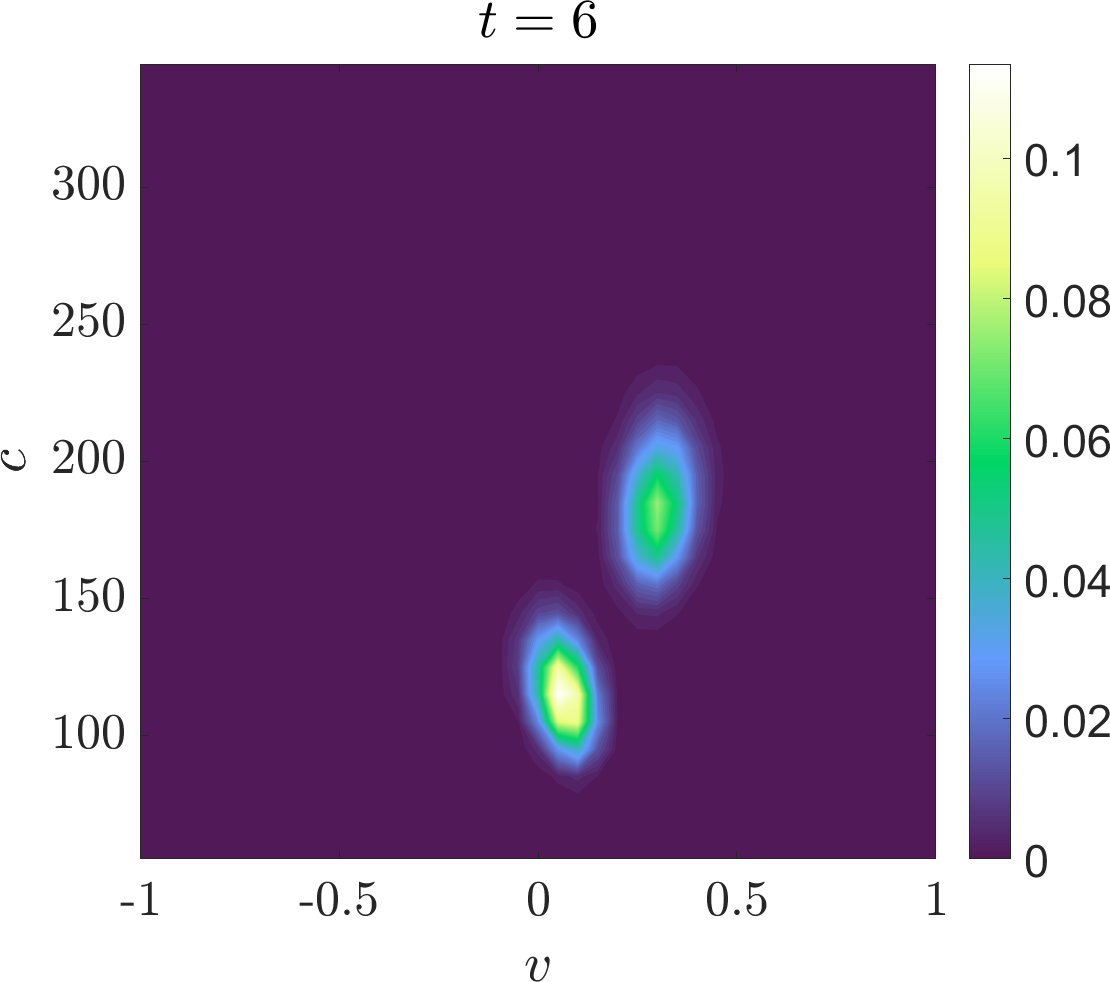} 
	\caption{Test $2$, $\sigma^2/\alpha = 0.005$. The pictures show the time evolution of the distribution function $f(v,c,t)$ for $t=0$ (left), $t=3$ (center), and $t=6$ (right). Agents with lower bound of connections are strongly influenced by agents with a large number of connections
	}\label{fig:test4bis}
\end{figure}

\subsection{Quantitative analysis}\label{qa}
In this last part, we use our model to quantitatively represent the opinion dynamics for what concerns some extrapolated real data taken from Twitter. To that aim, we start by describing the way in which the data are pre-processed through a so-called sentiment analysis with the scope of obtaining a set of information that can be effectively used in comparison with the model outcomes.

\subsubsection{Twitter sentiment analysis}
{\em Sentiment analysis} (or {\em opinion mining}) is a subfield of Natural Language Process (NLP) which works as a text classification tool that analyzes text data and extracts its intent, meaning if the underlying sentiment is positive, negative, or neutral. It is widely used for commercial purposes, in order to monitor brand and product sentiment in customer feedback and understand customer needs. Sentiment analysis methods can be divided into two main categories: statistical methods based on machine learning algorithms, and knowledge-based methods. Knowledge-based sentiment analysis approaches rely on a list of words called {\em sentiment lexicon} labeled as positive or negative, but they typically do not include sentiment-bearing lexical items as acronyms, emoticons, or slang terms (which are widely used in social media texts). Moreover, they do not account for differences in the sentiment intensity of words. Since many applications would benefit from determining not only the binary polarity but also the strength of the sentiment, some {\em sentiment intensity lexicons} have been implemented in the past, that associate a sentiment valence to the words and help measuring the intensity expressed in the sentences. This sentiment valence translates into a value called {\em polarity} which ranges between $-1$ and $1$ which is very well adapted to be employed in synergy with our model. One of the main problems of the natural language classifiers is that manually creating and validating a comprehensive sentiment lexicon is time intensive, so typically some machine learning approaches are incorporated to improve efficiency. These approaches come also with some drawbacks since they require intensive training data which are usually hard to acquire, they rely heavily on the vastness of the training set, they remain expensive in terms of CPU use, and they often make use of ``black boxes'' not easily interpretable and, therefore, not easily modifiable or generalizable.

In our analysis, we use VADER (Valence Aware Dictionary for sEntiment Reasoning), presented by C.J. Hutto and E. Gilbert in \cite{Hutto} in 2014. It uses a combination of qualitative and quantitative methods to build a list of lexical features that allow to perform sentiment analysis. This engine is specially constructed to give reliable results on social media texts and does not require a training dataset, since it is developed on a valence-based human curated lexicon. VADER's lexicon incorporates preexisting well-established word banks and several lexical features typical of social media, such as emoticons, acronyms, initialisms, and commonly used slang terms that are sentiment-related. These features were rated on a scale from $-4$ ``extremely negative'' to $+4$ ``extremely positive'' (normalized to $[-1;+1]$ in Python) using a ``wisdom-of-the-crowd'' approach, meaning that the ratings were given starting from the collection of answers of a series of independent human raters. VADER's developers then used a deep qualitative analysis resulting in isolating five generalized heuristics based on grammatical and syntactic cues to determine the sentiment intensity of short sentences. 
In the next sections \ref{test4} and \ref{test5} we use VADER to perform sentiment analysis on two different data sets of actual tweets.

\subsubsection{Test 4: Trump re-admission on Twitter}\label{test4}
For this simulation, we used the Application Programming Interfaces (API) of Twitter to obtain the content of a certain number of tweets on a given topic. More specifically we used the words ``Donald Trump'' and some related hashtags to select tweets written (in English) some days after the re-admittance of Donald Trump on Twitter on the 20th of November 2022, after an almost two-year long ban from the platform. We then employed VADER to analyze the texts of the tweets and we obtained a rating between $-1$ and $1$ for each tweet, which we considered to be the agents' opinions on the subject. To outline the polarized situation we remove tweets, which have scored exactly $``0"$ through VADER analysis.

To perform the model calibration we introduce a class of interacting functions,  where we make explicit the dependency with respect to a new set of parameters $\theta \in \Theta \subseteq \mathbb{R}^4_+$  as follows
\begin{subequations}\label{eq:subkern}
	\begin{equation}\label{eq:kernel_par}P(v,v_*,c,c_*;\theta) = H(v,v_*,c,c_*;\theta)K(c,c_*;\theta),
	\end{equation} 
	with 
	\begin{equation}\label{eq:H}
		H(v,v_*,c,c_*;\theta) = \chi(|v-v_*|< \Delta(c,c_*;\theta)),
	\end{equation} 
	where
	\begin{equation}\label{eq:delta}
		\Delta(c,c_*;\theta) =  \theta_1\left(\frac{ \log(1+c_*)}{\log(1+c_*)+\log(1+c)}\right)^{\theta_2},
	\end{equation} 
	and
	\begin{equation}\label{eq:K}
		K(c,c_*;\theta) = \left(\frac{ \log(1+c_*)}{\log(1+c_*)+\log(1+c)}\right)^{\theta_3},
	\end{equation} 
	while the diffusion is weighted by
	\begin{equation}
		D(v,c;\theta)  = \theta_4\sqrt{1-|v|^2}.
	\end{equation} 
\end{subequations}
The initial data is well prepared, assuming that the distribution of connections is at the stationary state \eqref{eq:logn} with parameters estimated in \ref{tab:t1}, and the joint distribution of opinions and contacts is such that
\[
f_0(v,c) = \begin{cases} \frac{h_\infty(c)}{2}, \quad &\mbox{if } 50 < c \leq 7500\\
	\frac{h_\infty(c)}{0.2\sqrt{2 \pi}}e^{-\frac{1}{2}\left(\frac{v + 0.5}{0.2} \right)^2}, &\mbox{if } c > 7500.
\end{cases}
\]
Finally,  we identify the distribution obtained from the data with $\hat g(v,t)$ while $g(v,t)$ is the one obtained simulating the virtual dynamics of particles. Then, we search the optimal value of the parameters $(\theta_1,\theta_2,\theta_3,\theta_4)\in\Theta$, where
$ \Theta =[0.5,1.5]\times[0.2,0.7]\times[1.5,2.5]\times[8,11],$
by minimizing the $\ell_1$ distance at final time $T= \{20\}$ of the marginal distribution of the simulated opinions $g(v,T)$ and the one reconstructed from data $\hat g(v,T)$.

	The minimization is performed using \texttt{patternsearch()} routine in \texttt{matlab} with initial guess $\theta^{(0)}=(1.1,0.4, 2.0, 10)$ and reaching convergence after  $k =60 $ iterations, where the estimated parameters is $\theta^{(k)}=(0.6313,0.2047, 2.3125; 9.7510)$. Here we minimize the discrepancy measure as the $\ell_1$ distance between the marginal distribution $g(\cdot,T|\theta^{(k)})$ and the marginal distribution reconstructed from data $\hat g(\cdot,T)$,  the final value is $\mathcal{D}_1(g(\cdot,T|\theta^{(k)}),\hat g(\cdot,T)) = 6.376\times10^{-2}$.

The minimization procedure is further detailed in the Appendix.
		%

In Figure \ref{fig:test4} we depict the comparison between our results and the data: on the left, we have the marginal of the opinions at time $t=20$ and the data, while on the right we have $f(v,c,t)$ (represented as $\log(f(v,c,t)+0.025)$) compared with the actual data extracted from Twitter (represented by the orange dots). We can claim that the model is capable of fitting the results obtained from a sentiment analysis with good accuracy.
\begin{figure}[h!]
	\centering
	\includegraphics[height = 5.85cm]{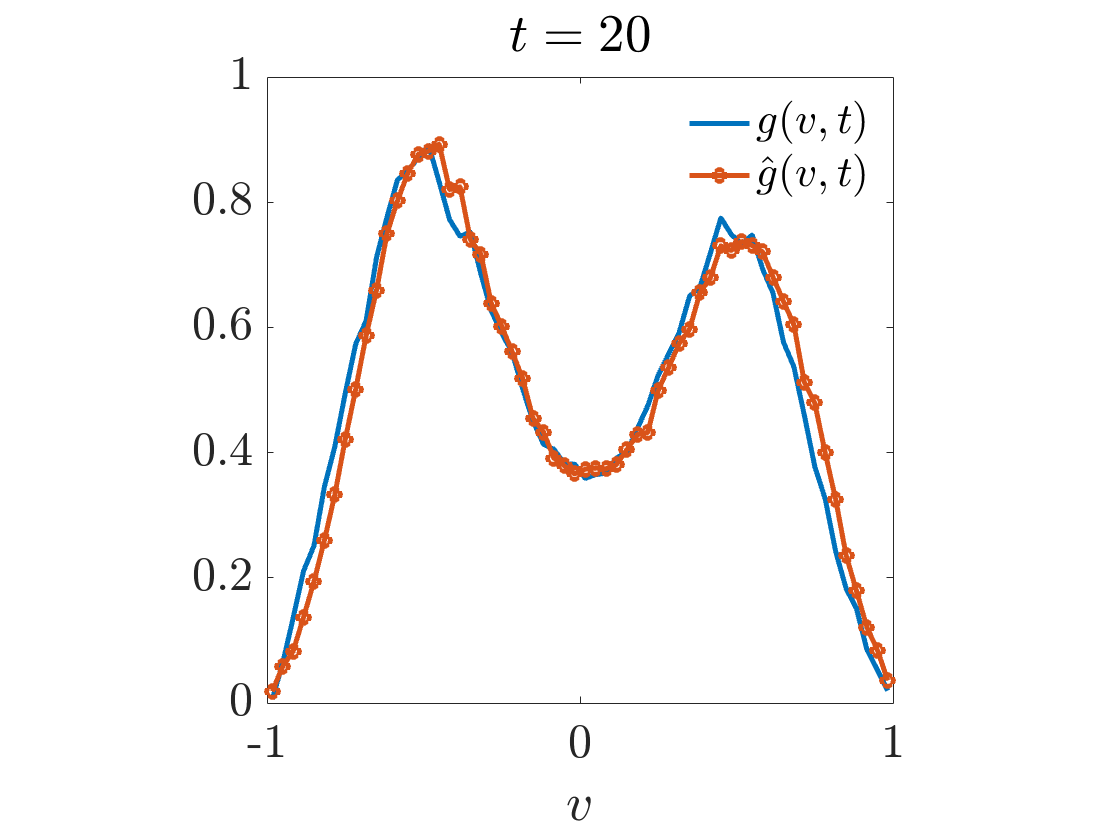}
	\includegraphics[height = 5.8cm]{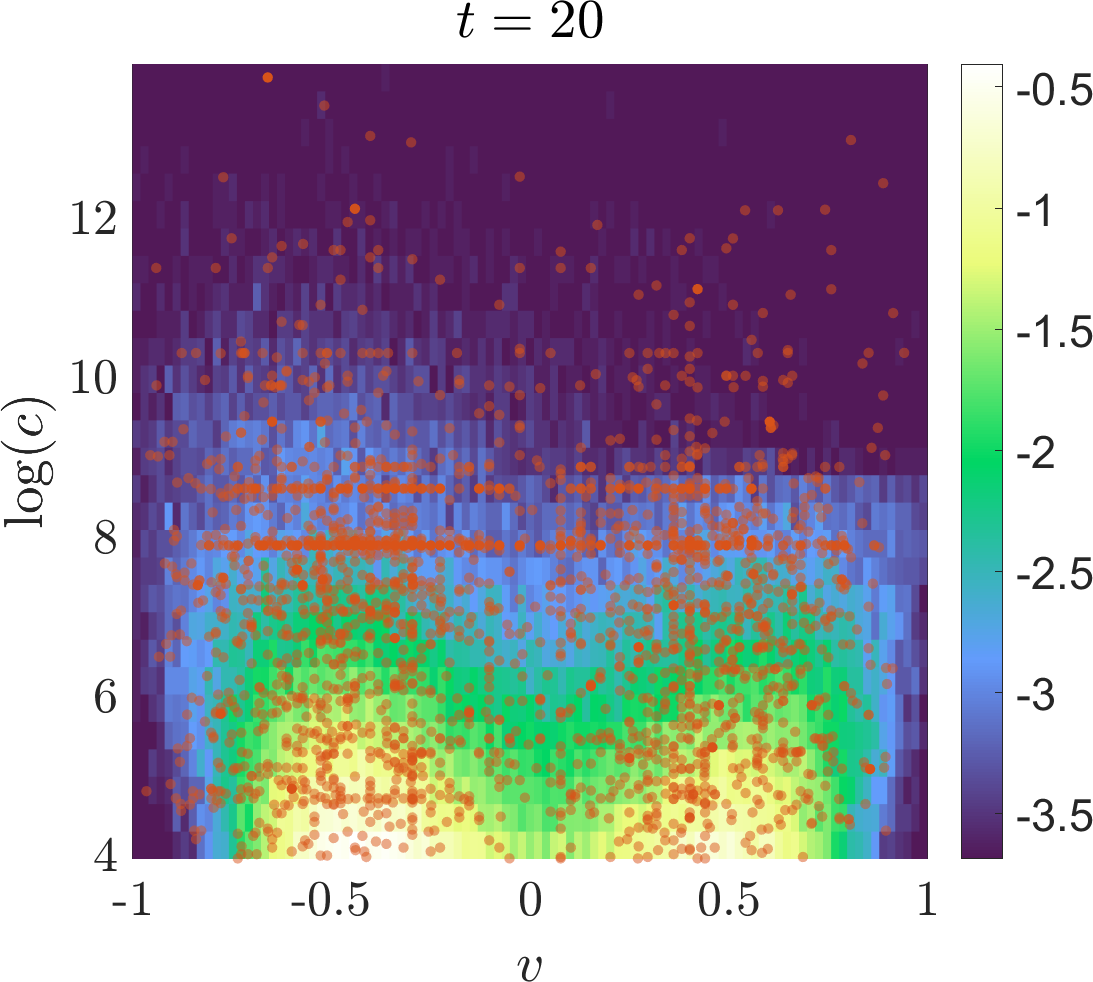}
	\caption{Test $4$: marginal distribution of opinions at the final time (left) and the comparison between the reconstructed density of opinions and contacts and the real data set}\label{fig:test4}
\end{figure}

\subsubsection{Test 5: Climate change trends on Twitter}\label{test5}
For this last simulation, we used the information coming from a {preexisting dataset (\cite{dataset_climatechange}) } containing the IDs of tweets discussing the \emph{climate change}. Such data were collected from Twitter's API between September \nth{21}, 2017 and May \nth{17}, 2019, using as track parameters some keywords related to the subject, such as ``climate change", ``global warming", and hashtags like ``$\#$climatechangeisreal", ``$\#$climatechangeisfalse", or ``$\#$globalwarminghoax". 
We used VADER to perform the sentiment analysis on the tweets collected on August \nth{13}, \nth{14}, \nth{15}, \nth{16}, and \nth{18} of the year 2018, and we remove from the data-set tweets with SA score exactly equal to $``0"$,  to reconstruct the data trend starting from a fictitious initial distribution of the opinions. 

Similarly to the previous Test \ref{test4} we consider the interaction functions reported in \eqref{eq:subkern},  and with well-prepared initial data, where we assume that the marginal distribution of the contacts is at the stationary state \eqref{eq:logn}
, and that the joint initial density of opinions and contacts is given by 
$$
f_0(v,c) = \begin{cases}\frac{h_\infty(c)}{154 \sqrt{2\pi}}\left(55\sqrt{2\pi} + 200e^{-\frac{1}{2}\left(\frac{v + 0.35}{0.15} \right)^2} + 28e^{-\frac{1}{2}\left(\frac{v - 0.25}{0.5} \right)^2}\right), \quad &\mbox{if }  c < 40\\
	\frac{h_\infty(c)}{6\sqrt{2\pi}}\left(250e^{-\frac{1}{2}\left(\frac{v + 0.8}{0.004} \right)^2} + 20e^{-\frac{1}{2}\left(\frac{v + 0.3}{0.2} \right)^2} + 5e^{-\frac{1}{2}\left(\frac{v - 0.3}{0.2} \right)^2} \right), &\mbox{if } 40 \leq c \leq 400, \\
	\frac{5 h_\infty(c)}{\sqrt{2 \pi}}e^{-\frac{1}{2}\left(\frac{v - 0.4}{0.2} \right)^2}, &\mbox{if } c > 400.
\end{cases}
$$
We assume that the data are referred to the following numerical time
$t_m = \{1, 2, 3, 4, 11\}$ and we denote by $\hat g(v,t)$ the empirical marginal distribution of the opinions obtained from Twitter, meaning that $\hat g(v,t_1)$ refers to August \nth{13}, $\hat g(v,t_2)$ refers to August \nth{14} and so on. 

Hence, to obtain the optimal value of the parameters $(\theta_1,\theta_2,\theta_3,\theta_4)\in\Theta$ in the admissible space
	\[
	\Theta = [0.5, 1]\times[0.1, 1.5]\times[0.01, 2]\times[0,0.05].
	\]
	we use the \texttt{fmincon()} \texttt{matlab} routine to minimize the discrepancy between data-reconstructed and simulated marginal distribution of opinions computing the sum over $t_m, m = 1,\dots,5,$ of the $1-$Wasserstein distances $\mathcal{W}^1_1(g(\cdot,t_m|\theta),\hat g(\cdot,t_m))$ as follows
	$$\mathcal{D}_1(g(\cdot|\theta),\hat g(\cdot)) = \frac{1}{M}\sum_{m=1}^M\mathcal{W}^1_1(g(\cdot,t_i|\theta),\hat g(\cdot,t_i)), \qquad M=5.$$
The chosen method reaches convergence after $k=11$ iterations with initial guess $\theta^{(0)}=(0.75,1.25,0.65,0.03)$ and
the estimated values of the parameters are $\theta^{(k)} = (0.7432,1.0735,0.9295,0.0306)$, resulting in a discrepancy of value $\mathcal{D}_1(g(\cdot|\theta),\hat g(\cdot)) =4.893\times10^{-1}$. We outline that the choice of a good initial guess, in this case, is of paramount importance.

In Figure \ref{fig:test5} we depict the evolution of the marginal distribution of opinions compared to the data, while Figure \ref{fig:test5_2} shows the initial and terminal density of opinions and contacts.
Again we can claim that we are able to follow the main trend of the opinion during the time even if compared to the previous situation of Test 4 in which the fitting was only about a given, supposed steady, state, here the differences are quite large in some time frameworks.

\begin{figure}[h!]
	\includegraphics[width=0.3\textwidth]{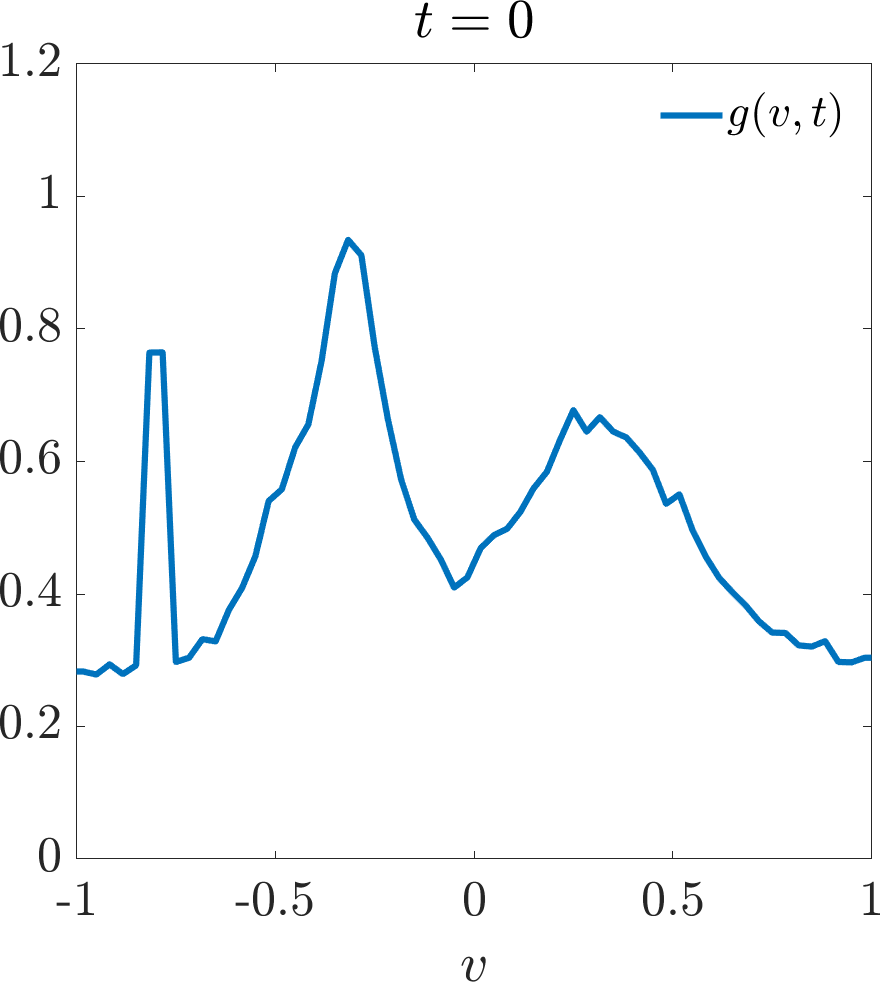} \quad\includegraphics[width=0.3\textwidth]{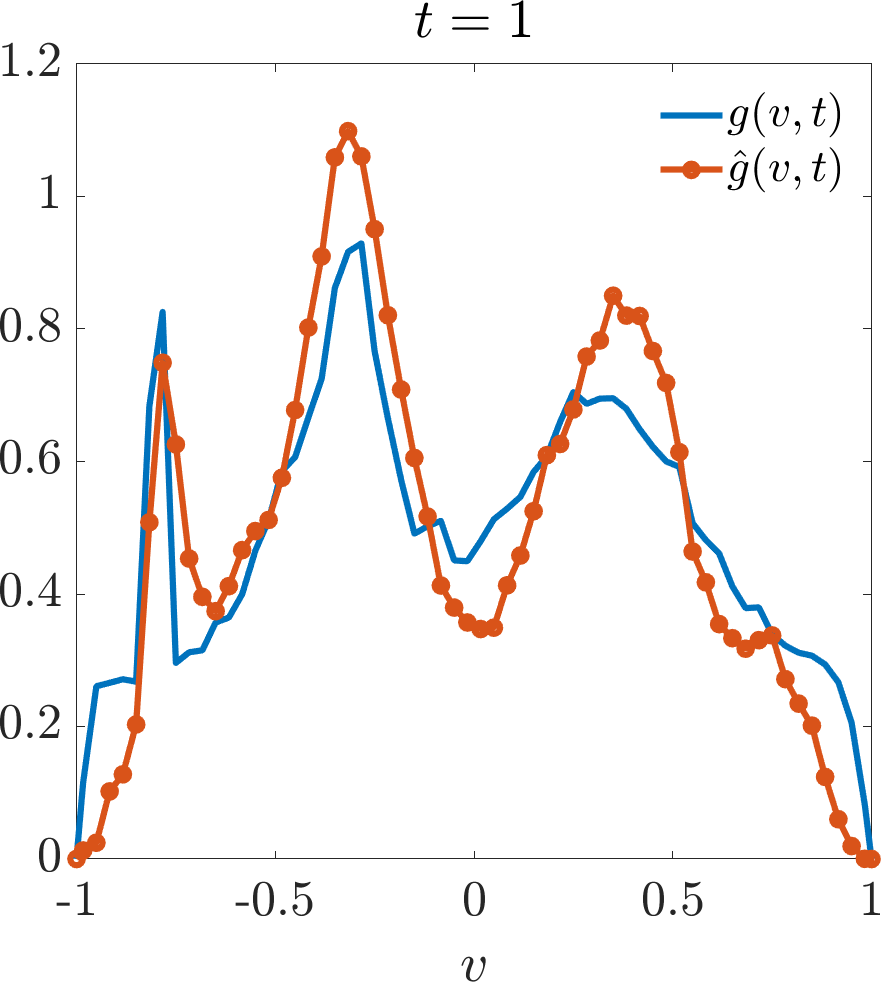} \quad\includegraphics[width=0.3\textwidth]{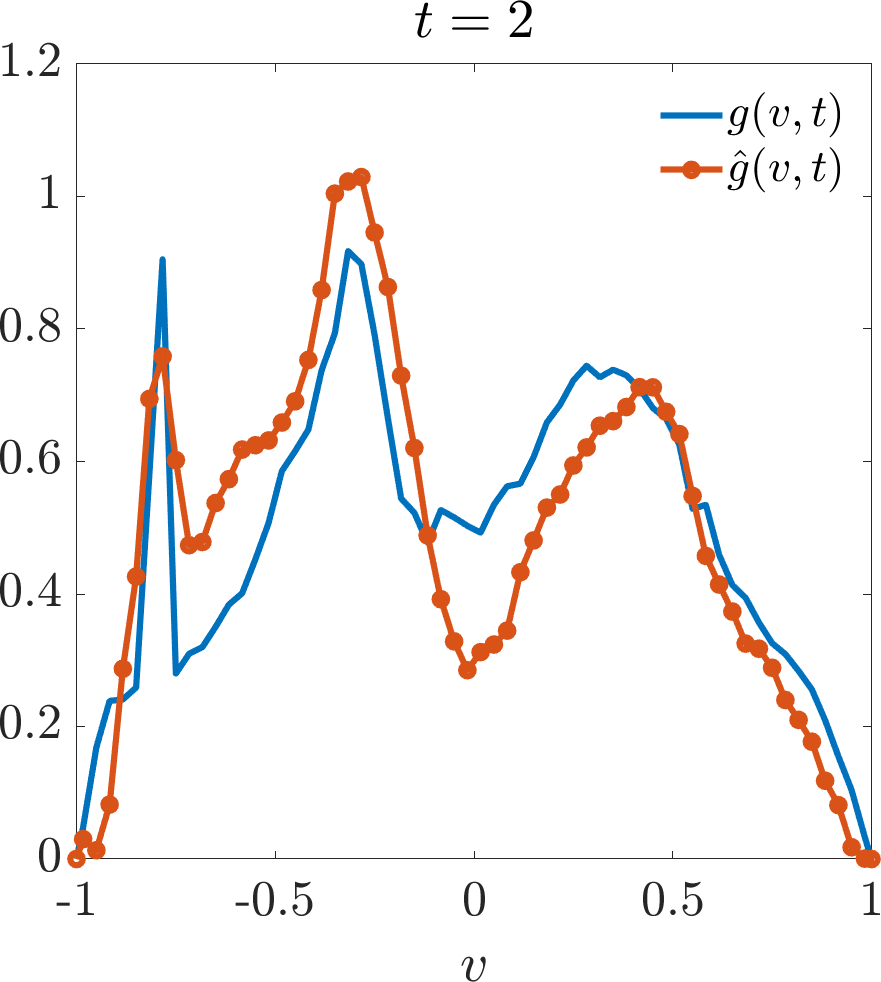}
	\newline
	\newline
	\includegraphics[width=0.3\textwidth]{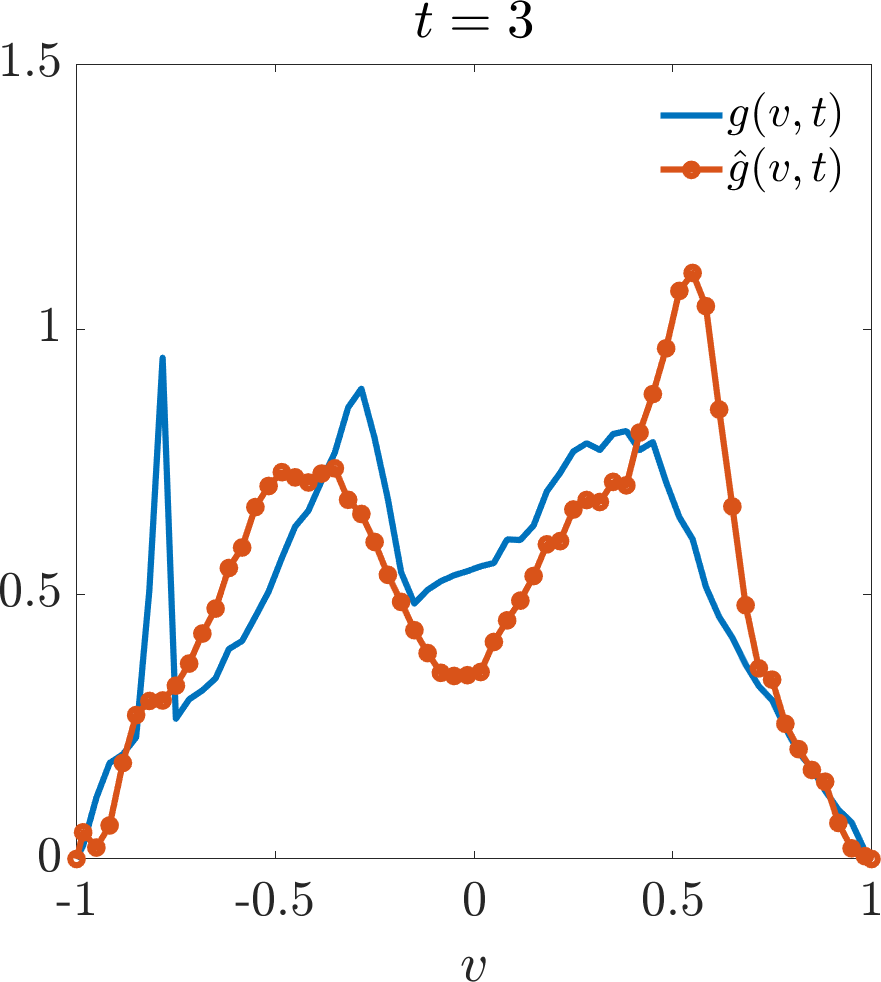} \quad\includegraphics[width=0.3\textwidth]{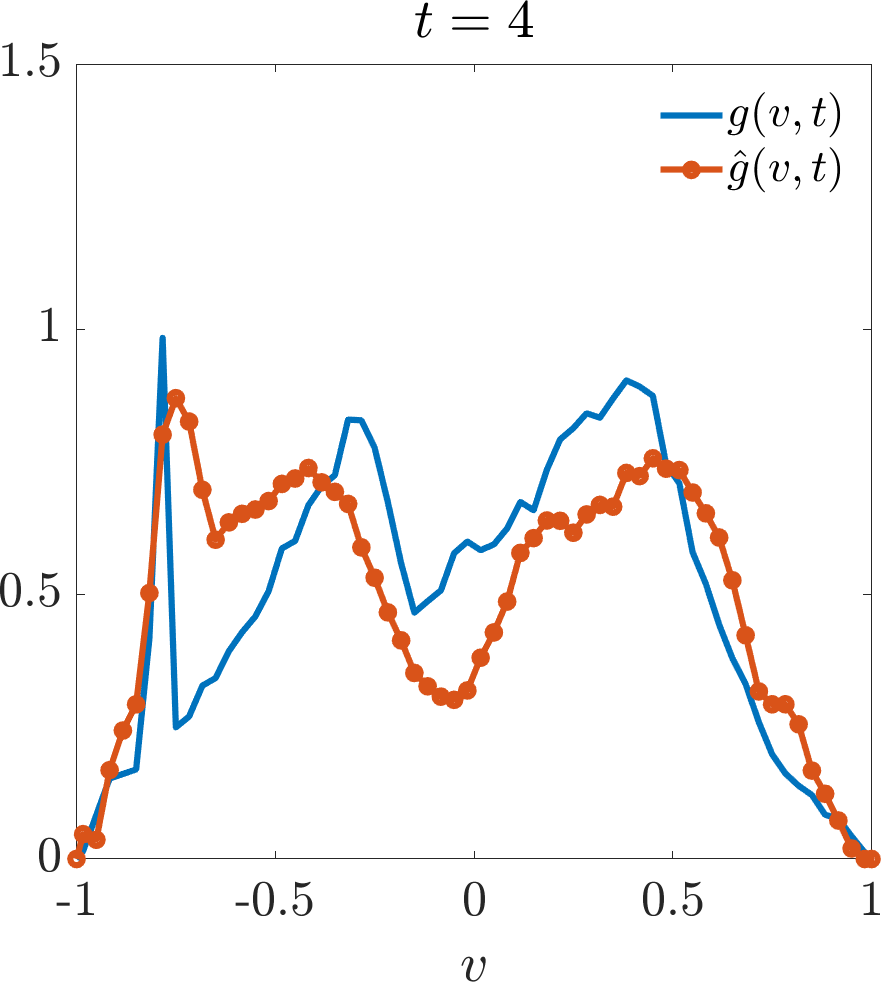} \quad\includegraphics[width=0.3\textwidth]{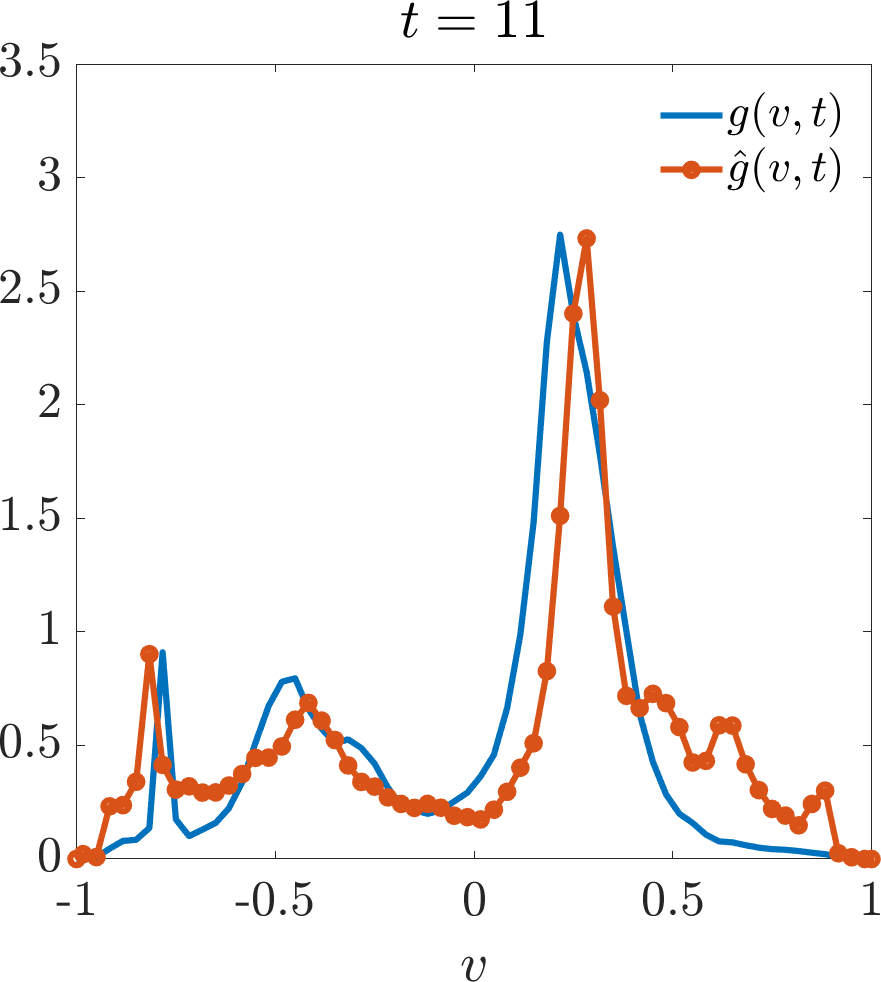} 
	\caption{Test 5: comparison between the marginal distribution of the opinions reconstructed using the presented model and the data at each time step $t \in \{1,2,3,4,11\}$ corresponding to data relative to the $13^{th}, 14^{th},15^{th},16^{th}$ and $18^{th}$ of August 2018}\label{fig:test5}
\end{figure}

\begin{figure}[h!]
	\includegraphics[width=0.5175\textwidth]{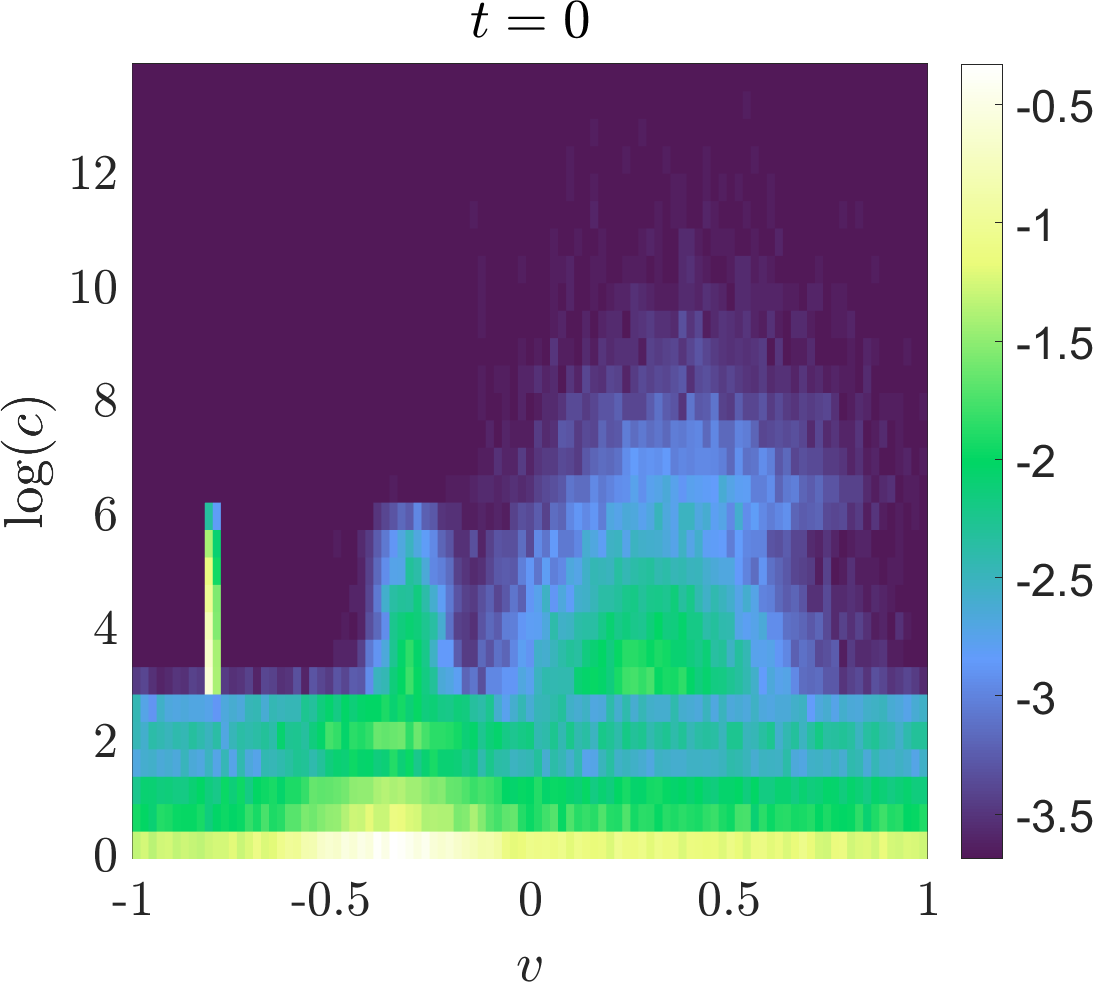}\quad\includegraphics[width=0.5\textwidth]{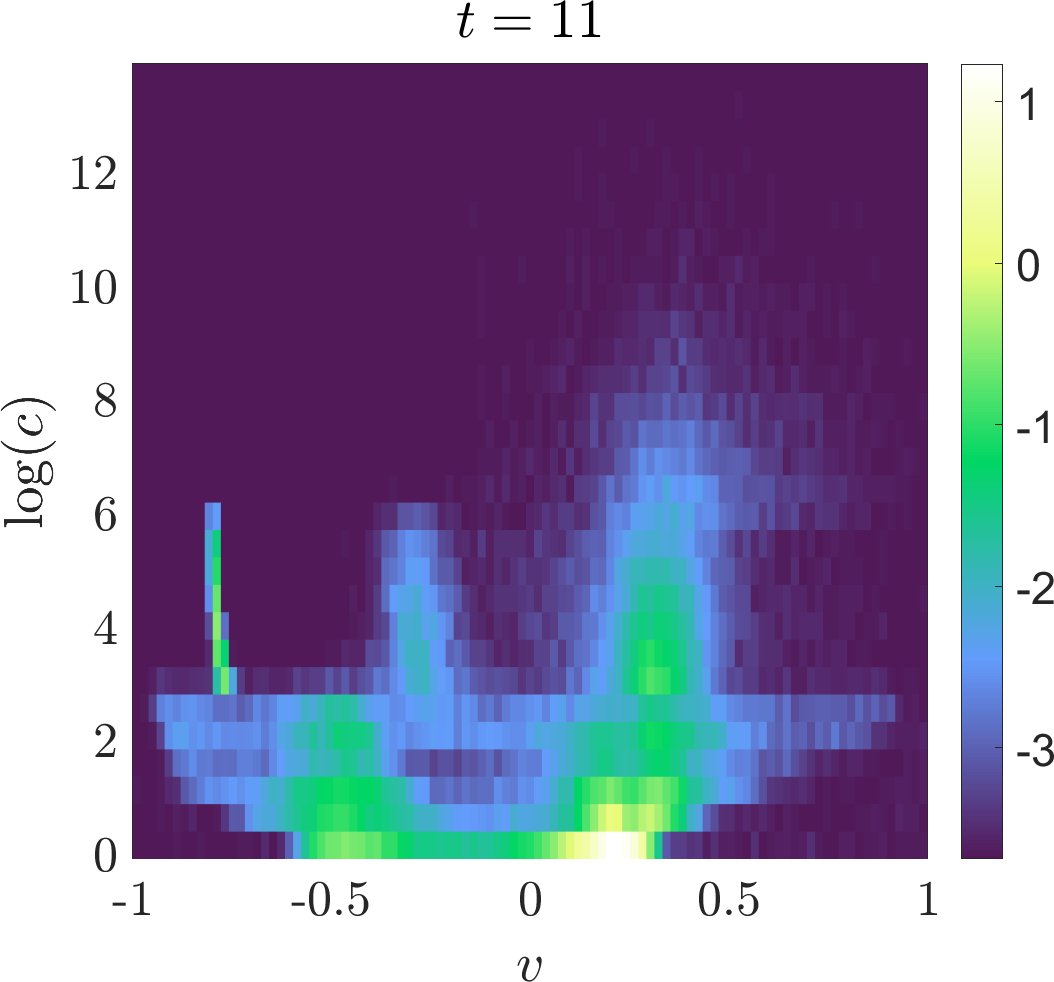}  
	\caption{Test $5$: initial (left) and final (right) joint density of opinions and contacts (the images show $\log(f(v,c,t) + 0.025)$) }\label{fig:test5_2}
\end{figure}

\section{Conclusions}\label{sec:conc}
In this work we have proposed a new model for opinion formation and evolution in presence of social media connections. Starting from a set of microscopic interactions characterizing the behavior of individuals acting on a social media platform, we first constructed a model for the network formation and then we tested the validity of our hypothesis through a comparison with a real data set extrapolated from Twitter. A fitting procedure permitted to recover the best set of parameters, which employed in our model is able to describe a network of people exchanging ideas and information over the net.
In the second part, we concentrated on the relationship between opinion and connections, and through a Boltzmann-like approach we recovered an equation 
which can describe opinion dynamics over a social network. Through a grazing limit procedure, we then obtained a Fokker-Planck asymptotic limit equation from which the main features of the model can be more clearly understood.
In the third part, we first performed some numerical simulations with the scope of showing some of the qualitative features of this new model and finally
thanks to sentiment analysis tools which permitted us to obtain realistic distributions of opinions on a certain topic from a given social media platform, we have shown that our model is indeed able to describe such dynamics. The results have been obtained thanks to a Wasserstein minimization method which permitted to 
estimating the best set of parameters of a given interaction kernel in the alignment opinion term. 
Possible future developments consist in improving the data-driven aspects of the model here presented by, for instance, replacing the parameter-dependent interaction kernel with its full reconstruction i.e. without assuming the apriori knowledge of its mathematical expression. A second direction that is worth to be explored regards the improvement of the model by assuming additional dependence on the opinion from the knowledge/education of the individuals. 

\section*{Acknowledgments}
This work has been written within the activities of the GNCS and GNFM groups of INdAM. This work was supported by the Italian Ministry of University and Research (MUR) through the PRIN 2020 project (No. 2020JLWP23) “Integrated Mathematical Approaches to Socio–Epidemiological Dynamics”, and by the MUR-PRIN Project 2022 PNRR No. P2022JC95T, “Data-driven discovery and control of multi-scale interacting artificial agent systems”, financed by the European Union - Next Generation EU.

\section{Appendix: particle-based kernel calibration}\label{sec:appendix}
In this section, we report the numerical procedure used in Section \ref{nume}, in particular for the calibration problem of the kernel parameters \eqref{eq:subkern}, introduced in sections \ref{test4} and \ref{test5}. To this end, we aim at minimizing the discrepancy between the opinion density obtained from numerical simulation and the score obtained from the sentiment analysis performed over data extracted from Twitter.

Thus we formulate the calibration of the kernel as a constrained optimization problem, which in the continuous form writes
\begin{align}
&\min_{\theta\in \Theta} \frac{1}{M}\sum_{m=1}^M\mathcal{D}_p(g(v,t_m|\theta),\hat g_m(\theta))\\
&\textrm{s.t.}\qquad
\mathcal{F}(f) = 0, \quad f^0(v,c)= f(v,c,0),\cr
&~\qquad g(v,t|\theta)= \int_{\R_+} f(v,c,t)\, dc,
\end{align}
where $ \mathcal{F}(f) = 0$ is a shorten notation for the Fokker-Planck equation \eqref{fpeq}, 
$g(v,t|\theta)$ represents the marginal opinion distribution of $f(v,c,t)$, $\hat g_m(v)$ represents the known distribution of opinions at time $\{t_m\}_{m=1}^M$, and $\mathcal{D}_p$ is a discrepancy measure, such as 
$p-$Wasserstein distance, or the $\ell_p$ distance,  with $p\geq 1$. In what follows  we propose, as a numerical approximation of this minimization procedure, a particle scheme based on two steps outlined in what follows.

\paragraph{Asymptotic particle-based scheme.} To simulate the evolution of the Fokker-Planck equation \eqref{fpeq} we rely on an asymptotic stochastic particle method to solve the kinetic dynamics \eqref{kine-ww} in the quasi-invariant regime \eqref{eq:scaling}. Then we introduce the following discretization
\begin{equation}\label{eq:dsmc}
f^{n+1} = \left(1-\frac{\Delta t}{\epsilon}\right)f^n + \frac{\Delta t}{\epsilon}Q^{\theta,+}_\epsilon(f^n,f^n),
\end{equation} 
where the gain operator $Q^{\theta,+}$ encodes the gain of particles in position $(v,c)$ at time $t$ after interactions \eqref{k1eps} and \eqref{eq.trules} have occurred.
The particle scheme for the simulation of \eqref{eq:dsmc} is reported in Algorithm \ref{nanbu}, where we set $\epsilon=\Delta t$ for simplicity. We refer to \cite{PT13} for details on this class of methods. 
Hence, this simulation scheme produces a sequence of data $\left\{(v^n_i,c^n_i)\right\}_{i=1}^{N_s}\sim f(v,c,t_n|\theta)$ for $n=0,\ldots,N_{t}$, that we can use in the next calibration procedure.
\begin{algorithm}
\caption{Asymptotic particle-based algorithm (Nanbu-like algorithm)}\label{nanbu}
\begin{algorithmic}[h!]
\State Fix $ 0<\epsilon = \Delta t<1$ and $N_s$.
\State Sample $\left\{v_i^0,c_i^0\right\}_{i=1}^{N_s}$ from the initial distribution $f_0(v,c)$
\For{$n = 0:N_t-1$}	
\State Set $N_c = round(N/2)$,
\State Select $N_c$ random pairs $(i, i_*)$ uniformly, without repetition among all possible pairs
\For {$i = 1:N_c$}. 
\State  Sample $\xi^n_i,\xi^n_{i_*}$ from a normal distribution $\mathcal N(0,1)$ and compute
\begin{equation*}
\begin{aligned}
	v_i^{n+1}         &=         v_i^{{n}}          + \epsilon \alpha P(v^n_i,v^n_{i_*},c^n_i,c^n_{i_*}|\theta) + \sqrt{\epsilon}\sigma D(v^n_i,c^n_i|\theta) \xi^n_i\cr
	v_{i_*}^{n+1}  &=          v_{i_*}^{{n}}  + \epsilon \alpha P(v^n_{i_*},v^n_i,c^n_{i_*},c^n_i|\theta)+ \sqrt{\epsilon}\sigma D(v^n_{i_*},c^n_{i_*}|\theta) \xi^n_{i_*}
\end{aligned}
\end{equation*}
\State Sample $\eta^n_{\epsilon,i}$ from a uniform random distribution s.t.  $\langle \eta^n_{\epsilon,i}\rangle =0$, $\langle (\eta^n_{\epsilon,i})^2\rangle =\epsilon\nu^2$,
\State  with values in the bound \eqref{eq:boundeta}.
\State Compute
\[
c_i^{n+1} =  c^{n}_i - \Psi^\epsilon_\delta(c^n_i/\bar c)c_i^n + \eta^n_{\epsilon,i} c_i^n.
\]
\EndFor
\EndFor
\end{algorithmic}
\end{algorithm}
\paragraph{Minimization of particle-based discrepancy} 
To minimize the discrepancy between the densities we rely directly on the information provided by the particles. Hence, from the particle simulation of  \eqref{eq:dsmc} we retrieve $ V^n(\theta)=\left\{v^n_i\right\}_{i=1}^{N_s}\sim g(v,t_n|\theta)$ for $n=1,\ldots,N_t$, whereas we recover the target particles sampling from the real-data distributions $N_s$ particles  $\hat V^m=\left\{\hat v^m_i\right\}_{i=1}^{N_s}\sim \hat g_m(v)$.
Hence we obtain the parameter $\theta^*$ as the minimizer of the following problem
\begin{equation}\label{eq:wass_1db}
\theta^*\in\arg\min_{\theta\in\Theta} \frac{1}{M}\sum_{m=1}^M\mathcal{D}_p(g_m^{N_s}(\theta),\hat g_m^{N_s}),
\end{equation} 
constrained to the evolution of the particle scheme \eqref{eq:dsmc}. Notice that 
the discrepancy $\mathcal{D}_p$is evaluated for the empirical densities $g_m^{N_s}(\theta),\hat g_m^{N_s}$ relative to the samples $V^m(\theta), \hat V^m$.


In order to perform the minimization of \eqref{eq:wass_1db}  we need to produce solutions in the admissible parameter space $\Theta$. Here we rely on \texttt{Matlab} routines for constrained minimization such as  \texttt{fmincon}, based on interior-point method, and \texttt{patternsearch}, as a gradient-free optimization method.
Indeed the fluctuations of the particle method are reflected in the discrepancy measure. Thus, to reduce the stochasticity induced by the particle simulation,  at each iteration of the optimization procedure, we have fixed the random-seed generator in Algorithm \ref{nanbu}.


\begin{remark}
Computing the discrepancy measure in \eqref{eq:wass_1db} can be challenging, for example in the aforementioned case of $p$-Wasserstein distance. However, in our case we can exploit the one-dimensional framework of the opinion space, hence computing equivalently 
\begin{equation}\label{eq:wass_1d}
\mathcal{D}_p( g_m^{N_s}(\theta),\hat g_m^{N_s})\equiv \mathcal{W}^p_p(g_m^{N_s}(\theta),\hat g_m^{N_s}) = \frac{1}{N_s}\sum_{i=1}^{Ns} |v^m_{\pi(i)}(\theta) - \hat v^m_{\pi'(i)}|^p,
\end{equation} 
where  $\pi$ and $\pi'$ are two permutation of the indices $1, \dots, Ns$ such that $v^m_{\pi(1)} \leq v^m_{\pi(2)}\leq \dots \leq v^m_{\pi(N_s)}$ and $\hat v^m_{\pi'(1)} \leq \hat v^m_{\pi'(2)}\leq \dots \leq\hat v^m_{\pi'(N_s)}$.
When $\ell_p$-distance is considered the discrepancy measure simply writes as follows
\begin{equation}\label{eq:wass_1dc}
\mathcal{D}_p( g_m^{N_s}(\theta),\hat g_m^{N_s})\equiv \int_{[-1,1]} |g_m^{N_s}(v|\theta)-\hat g_m^{N_s}(v)|^p\, dv,
\end{equation} 
where $g_m^{N_s}(\theta),\hat g_m^{N_s}$ have to be appropriately reconstructed  from the samples $V^m(\theta)$ and  $\hat V^m$.
\end{remark}

\begin{remark}
We remark that the optimization problem \eqref{eq:wass_1db} is in general a high-dimensional non-convex problem requiring efficient optimization methods see for example \cite{totzeck2021trends,borghi2023constrained,fornasier2021consensus}.
Different approaches are also advisable, reformulating the calibration into a function-approximation framework can give more generalizable results for the kernel inference, see for example \cite{bongini2017inferring,gottlich2022parameter,fiedler2023reproducing,chu2022inference}.
\end{remark}

\bibliographystyle{abbrv}
\bibliography{biblio_opinion}
\end{document}